\newcommand{\refcheckize}[1]{%
  \expandafter\let\csname @@\string#1\endcsname#1%
  \expandafter\DeclareRobustCommand\csname relax\string#1\endcsname[1]{%
    \csname @@\string#1\endcsname{##1}\wrtusdrf{##1}}%
  \expandafter\let\expandafter#1\csname relax\string#1\endcsname
}
\theoremstyle{plain}
\newtheorem{theorem}{Theorem}
\newtheorem{lemma}[theorem]{Lemma}
\newtheorem{corollary}[theorem]{Corollary}
\theoremstyle{definition}
\newtheorem{example}[theorem]{Example}
\theoremstyle{remark}
\renewcommand\SS{\mathbb S}
\DeclareMathOperator{\PP}{\mathbb{P}}
\DeclareMathOperator{\R}{\mathbb{R}}
\DeclareMathOperator{\E}{\mathbb{E}}
\DeclareMathOperator{\sD}{\mathcal{D}}
\DeclareMathOperator{\sE}{\mathcal{E}}
\DeclareMathOperator{\sT}{\mathcal{T}}
\DeclareMathOperator{\sV}{\mathcal{V}}
\DeclareMathOperator{\sW}{\mathcal{W}}
\DeclareMathOperator{\sX}{\mathcal{X}}
\newcommand{\rh}[1]{\left(#1\right)}
\newcommand{\vh}[1]{\left[ #1 \right]}
\newcommand{\set}[1]{\left\{ #1 \right\}}
\newcommand{\inpr}[1]{\left\langle#1\right\rangle}
\newcommand{\floor}[1]{\left\lfloor #1 \right\rfloor}
\DeclareMathOperator{\cov}{cov}
\DeclareMathOperator{\var}{var}
\DeclareMathOperator{\bias}{bias}
\DeclareMathOperator{\x}{\times}
\DeclareMathOperator{\im}{im}
\DeclareMathOperator{\diag}{diag}
\DeclareMathOperator{\op}{op}
\newcommand{\norm}[1]{\left\|#1\right\|}
\newcommand{\abs}[1]{\left|#1\right|}
\DeclareMathOperator{\en}{\quad\text{and}\quad}
\DeclareMathOperator{\sdot}{\,\cdot\,}
\DeclareMathOperator{\argmin}{arg min}
\begin{document}


\begin{center}
\Large{Estimation of the covariance structure from SNP allele frequencies}
\end{center}

\vspace{.5cm}
\begin{center}
\large{Jan van Waaij$^{1}$, Zilong Li$^{2}$, Carsten Wiuf$^{1,*}$}
\end{center}

\begin{center}
$^*$ Corresponding author (wiuf@math.ku.dk)
\end{center}

\vspace{3cm}
\noindent
Running title: Estimation of SNP Covariance Matrix

\vspace{1cm}
\noindent
$^1$ Department of Mathematical Science, University of Copenhagen, 2100 Copenhagen, Denmark

\noindent
$^2$ Department of Biology, University of Copenhagen, 2100 Copenhagen, Denmark

\newpage

\begin{abstract}
 We propose two new statistics, $\widehat V$ and $\widehat S$,  to disentangle the population history of related populations from SNP frequency data. If the populations are related by a tree, we show by theoretical means as well as by simulation that the new statistics are able to identify the root of a tree correctly, in contrast to standard statistics, such as the observed matrix of $F_2$-statistics (distances between pairs of populations). The statistic  $\widehat V$ is obtained by averaging over all SNPs (similar to  standard statistics). Its expectation is the true covariance matrix of the observed population SNP frequencies, offset by a matrix with identical entries. In contrast, the statistic $\widehat S$ is put in a Bayesian context and is obtained by averaging over  \emph{pairs} of SNPs, such that each SNP is only used once. It thus makes use of the joint distribution of pairs of SNPs.

In addition, we provide a number of novel mathematical results about old and new statistics,  and their mutual relationship.

 \end{abstract}

\section{Introduction}

A common situation in population genetics is ancestral disentanglement of related populations   \citep{PickrellPritchard2012,Pattersonea2012,Mailund2017,Lipson2020,Goldberg2021}. Imagine we observe genetic data in the form of allele frequencies from $n$ SNPs and $m$ related populations, and assume the population history is described by an unknown admixture graph. This graph  is  estimated from the data under the assumption of neutral evolution. 
The estimation typically takes place in two steps. First the covariance matrix of the SNP allele frequencies is estimated from the data, which in turn is used to determine the admixture graph.  This covariance matrix is at the core of much inference on population history. In this article we are interested in efficient estimation of the covariance matrix.

To put some notation, assume we observe  data vectors, $X^1,\ldots,X^n$ (one for each SNP), where $X^k=(X_{1}^k,\ldots,X_{m}^k)^t$ is an $m$-dimensional real-valued vector with common expectations $\E(X_{i}^k)=\mu_k$, $i=1,\ldots,m$, and $m\times m$ covariance matrix $\Sigma^k$, $k=1,\ldots,n$. Here, $X_{j}^k$ is the frequency of a particular allele (say, the reference allele) of the $k$th SNP  in population $j$. While it is standard to assume  an underlying admixture graph or tree \citep{Pattersonea2012,Lipson2020}, we will not  impose this here. However, we do assume the populations share a common ancestor (`root') at some point in the past, represented by the  mean value $\mu_k$.

The objective is to estimate
\begin{equation}\label{eq:1}
 \Sigma=\frac 1 n \sum_{k=1}^n \Sigma^k=\frac 1n \sum_{k=1}^n \E[(X^k-\mu_ke)(X^k-\mu_ke)^t],
\end{equation}
the average covariance matrix over all sites. The mean values $\mu_k$, $k=1,\ldots,n$, are  nuisance parameters of little interest.
In the absence of any knowledge about $\mu_k$, \cite{PickrellPritchard2012} suggests a surrogat  statistic $\widehat W$ for a related covariance  matrix $W$, obtained from $\Sigma$ by replacing $\mu_k$ in \cref{eq:1} with the average allele frequency. If the population history is a tree, one cannot infer the placement of the root from $\widehat W$.  Consequently, to rectify this, one might choose manually one population  as an outgroup and use this to place the root \citep{PickrellPritchard2012}. The same situation appears  for another surrogate statistic, the observed  distance matrix  $\widehat D$, that is an estimator of the  pairwise $F_2$-distance matrix $D$ (for formal definitions, see \cref{sec:covar}) \citep{Pattersonea2012}. 

In the present paper, we are concerned  with two things. The first is to make available some results on the  statistics $\widehat W$ and $\widehat D$,  and their mutual relationship. The second is to propose two new statistics, $\widehat V$ and $\widehat S$, that both can be used to recover the placement of the root, without using an outgroup. Whereas, $\widehat V$ is similar in spirit to $\widehat W$ and $\widehat D$ in the sense of averaging over all SNPs,  $\widehat S$ is based on pairwise comparison of SNPs, and is put in a Bayesian context. This statistic might open for new ways to explore the data. 

The results are stated  generally and do not rely on any specific distributional assumptions on the SNP allele frequencies. In particular, the $X_{i}^k$s do not need to be frequencies  at all, but could be arbitrary random variables with mean and variance. Hence, the proposed theory and methodology might have  wider applications in population genetics and genomics, as well as outside these fields.

\paragraph{Notation}

If $A$ is a matrix, then $A^t$  denotes the transposed matrix. Vectors are assumed to be column vectors. If $v$ is a vector, then $v^t$ is a row vector. 
Let $I$ be the $m\times m$ identity matrix, $E$ the symmetric $m\times m$ square matrix with all entries equal to one, and $e=(1,\ldots,1)^t$ the vector in 
$\R^m$ with all entries one.  Furthermore, let $e_i$ be the $i$th unit vector, $i=1,\ldots,m$.  So, $(e_i)_i=1$ and $(e_i)_j=0$ for $j\neq i$.  

For an $m\x m$ matrix $A$,  the Frobenius norm of $A$ is  $\|A\|_F=\sqrt{\sum_{a=1}^m\sum_{b=1}^m A_{ab}^2}$. 
For a linear operator $\sX\colon\mathbb{S}_m\to \mathbb{S}_m$,   the image is $\im(\sX)=\set{\sX(A)\colon A\in\mathbb{S}_m}$, and  the operator norm is
$$\| \sX\|_\text{op}=\sup_{\|A\|_F=1}\|\sX(A)\|_F.$$
If $\sX$ is an orthogonal projection then the operator norm is one. 

\section{Estimation of the covariance matrix}\label{sec:covar}

The theory to be developed holds for general random vectors, $X^1,\ldots,X^n$ with values in $\R^m$, $m\ge 2$. However, we put the theory in the context of population genetics as this is the application area we have in mind. Thus, we think of $X^k=(X_1^k,\ldots,X^k_m)$, $k=1,\ldots,n$,  as vectors of observed allele frequencies, either population or sample based.
 
Recall the covariance matrix in \cref{eq:1}, 
$$ \Sigma=\frac 1 n \sum_{k=1}^n \Sigma^k=\frac 1n \sum_{k=1}^n \E[(X^k-\mu_ke)(X^k-\mu_ke)^t].$$
 In the case  the means $\mu_k$, $k=1,\ldots,n$, are known, then a natural  unbiased estimator of $\Sigma=(\Sigma_{ij})_{i,j=1,\ldots,m}$, is
\begin{equation}\label{eq:definitionhatSigma}
\widehat \Sigma_{ij} = \frac1n \sum_{k=1}^n (X_{i}^k-\mu_k)(X_{j}^k-\mu_k),\quad i,j=1,\ldots,m.
\end{equation}
However, in  the absence of  such knowledge, we cannot  estimate $\Sigma$  from the data without further assumptions.  This has led to the proposal of alternative approaches, for example by substitution of $\mu_k$ with an estimated mean \citep{PickrellPritchard2012}. A natural unbiased estimator of $\mu_k$ is the moment estimator $\frac1m\sum_{i=1}^m X_i^k$. Plugging this into \cref{eq:definitionhatSigma}, yields the statistic $\widehat W$ given by 
\[\widehat W_{ij} = \frac1n\sum_{k=1}^n \rh{X_{i}^k - \widehat\mu_k}\rh{X_{j}^k- \widehat\mu_k},\quad\text{where}\quad \widehat\mu_k=\frac1m\sum_{i=1}^m X_{i}^k\]
is the empirical mean \citep{PickrellPritchard2012}. This is {\it not} an estimator of $\Sigma$ \emph{per se}, but it still contains information about the data generating process. In  \cite{PickrellPritchard2012},  $\widehat W$ is used as a surrogate for $\widehat\Sigma$.

Obviously, $\widehat W$ is a symmetric matrix and 
 \begin{align*}
\widehat W_{ij}  &= \frac1n\sum_{k=1}^n \rh{(X_i^k-\mu_k) - \frac1m \sum_{a=1}^m (X_{a}^k-\mu_k)}\rh{(X_j^k-\mu_k)- \frac1m \sum_{b=1}^m (X_{b}^k-\mu_k)}\\ 
 &= \widehat \Sigma_{ij} - \frac1m\sum_{a=1}^m \widehat \Sigma_{ia} -\frac1m\sum_{b=1}^m \widehat \Sigma_{jb}+\frac1{m^2} \sum_{a=1}^m \sum_{b=1}^m \widehat \Sigma_{ab}
\end{align*}
 \cite[equation 23]{PickrellPritchard2012}.

\cite{Pattersonea2012} suggest a different statistic to capture  the evolutionary distances between the populations. For populations $i$ and $j$, and  SNP $k$, the distance between the populations
 (at SNP $k$) is defined as the variance of $X_i^k-X_j^k$, which is known as an $F_2$ statistic. 
Let $D$ be the matrix with $(i,j)$ entry  $D_{ij}=\frac 1 n\sum_{k=1}^n\var(X_i^k-X_j^k)$.  An obvious estimator of $D$ is  defined by
\[
\widehat D_{ij} = \frac1n\sum_{k=1}^n (X_i^k-X_j^k)^2. 
\]
Also, $D$ and $\widehat D$ are symmetric matrices, and 
\[\widehat D_{ij} = \frac1n\sum_{k=1}^n (X_i^k-\mu_k+\mu_k-X_j^k)^2=\widehat \Sigma_{ii} + \widehat \Sigma_{jj} - 2\widehat \Sigma_{ij}.\]
Thus, both $\widehat D$ and $\widehat W$ are linear transformations of $\widehat \Sigma$. Furthermore, they are related to each other by an isomorphism (see \cref{lem:bijectionbetweenDandW}), and hence carry the same information. 
To formalise this, we need some further notation.

Let $\mathbb{S}_m$ be the vector space of symmetric $m\x m$-matrices with dimension $m(m+1)/2$, and define linear operators $\sD, \sW\colon\mathbb{S}_m\to \mathbb{S}_m$ by
\begin{align*}
\sW(A)_{ij} = & A_{ij} - \frac1m \sum_{a=1}^m A_{ia}- \frac1m \sum_{a=1}^m A_{ja}+ \frac1{m^2} \sum_{a=1}^m\sum_{b=1}^m A_{ab},\\
	\sD(A)_{ij} = & A_{ii}+A_{jj} -2 A_{ij}.
\end{align*}
 Obviously, $\widehat D= \sD(\widehat \Sigma)$ and $\widehat W = \sW(\widehat \Sigma)$. 

\begin{lemma}
The linear operator $\sW$ has the representation 
\begin{equation}\label{eq:WA}
\sW(A)= \left(I-\frac 1mE\right)A\left(I-\frac 1 mE\right).
\end{equation}
Consequently, if $A$ is positive definite, then $\sW(A)$ is positive semi-definite.  In particular, $\sW(\Sigma)$ is positive semi-definite.   
\end{lemma}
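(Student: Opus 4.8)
The plan is to recognise $I-\frac1mE$ as the centering matrix and then verify the factorisation by a direct entrywise expansion. Set $P=I-\frac1mE$ and note that $E=ee^t$, so that $Ee=me$ and $P$ is symmetric. First I would expand
\[
PAP = A - \tfrac1m AE - \tfrac1m EA + \tfrac1{m^2} EAE
\]
and compute the $(i,j)$ entry of each term. A short calculation gives $(EA)_{ij}=\sum_{a=1}^m A_{aj}$, $(AE)_{ij}=\sum_{b=1}^m A_{ib}$, and $(EAE)_{ij}=\sum_{a,b=1}^m A_{ab}$. Substituting these and using the symmetry of $A$ to replace $\sum_a A_{aj}$ by $\sum_a A_{ja}$, the $(i,j)$ entry of $PAP$ becomes exactly $A_{ij}-\frac1m\sum_a A_{ia}-\frac1m\sum_a A_{ja}+\frac1{m^2}\sum_{a,b}A_{ab}$, which is the definition of $\sW(A)_{ij}$. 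This proves \cref{eq:WA}.

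For the definiteness claim I would exploit that $P$ is symmetric, so that $\sW(A)=PAP=P^tAP$ is a congruence of $A$. Hence for every $v\in\R^m$,
\[
v^t\sW(A)v = v^tP^tAPv = (Pv)^tA(Pv)\ge 0
\]
whenever $A$ is positive semi-definite, and in particular whenever $A$ is positive definite; thus $\sW(A)$ is positive semi-definite. The conclusion cannot be strengthened to positive definiteness because $P$ is singular: since $Ee=me$ we have $Pe=0$, whence $e^t\sW(A)e=0$ for every $A$. Finally, $\Sigma$ is an average of the covariance matrices $\Sigma^k$ and is therefore positive semi-definite, so applying the same congruence argument to $A=\Sigma$ yields that $\sW(\Sigma)$ is positive semi-definite.

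There is no genuine obstacle here; the only two points requiring care are the appeal to the symmetry of $A$ when matching the linear terms (so that the two single sums in the definition of $\sW$ are reproduced correctly), and the observation that the singularity of $P$ — equivalently, that the all-ones vector $e$ lies in its kernel — is precisely why only semi-definiteness, and not definiteness, survives the transformation.
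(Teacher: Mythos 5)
Your proposal is correct and follows essentially the same route as the paper: an entrywise verification of the factorisation $\sW(A)=\left(I-\frac1mE\right)A\left(I-\frac1mE\right)$ (which the paper dismisses as ``straightforward evaluation'' but you spell out, correctly invoking the symmetry of $A$ to match the row/column sums), followed by the same congruence/quadratic-form argument $v^t\sW(A)v=(Pv)^tA(Pv)\ge 0$ for semi-definiteness. Your added observations --- that $Pe=0$ explains why definiteness cannot survive, and that $\Sigma$ need only be positive semi-definite (the paper's claim that a covariance matrix is positive definite is in fact a slight overstatement) --- are correct refinements, not deviations.
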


\begin{proof}
 The first part follows  by straightforward evaluation. For the second part, let $x\in\R^m$. We have 
\begin{align*}
    x^t\sW(A)x = x^t\left(I-\tfrac 1mE\right)A\left(I-\tfrac 1 mE\right)x= \left(\left(I-\tfrac 1mE\right)x\right)^tA\left(I-\tfrac 1 mE\right)x\ge 0, 
\end{align*}
as $A$ is positive semi-definite by assumption. The final part follows by noting that $\Sigma$ is positive definite, since it is a covariance matrix. 
\end{proof}

\begin{theorem}\label{lem:bijectionbetweenDandW}
	We have  
\[	\sD = \sD \circ \sW,\quad \sW = -\tfrac12\sW \circ \sD,\quad  \sD\circ \sD=-2\sD\en \sW\circ \sW=\sW.  
	\]
	The operator $\sW$ is an orthogonal projection (hence has operator norm one), 
 while $-\frac12\sD$  is a non-orthogonal  projection with operator norm 
$$ \|\!-\!\tfrac 1 2\sD\|_{\op}=\sqrt m.$$
	 The operators $\sD$ and $\sW$ have  the same $m$-dimensional kernel, given by 
\begin{align*}
K&=\set{ev^t+ve^t\colon v\in \R^m}, 
\end{align*}
	The restrictions
	\[\sW\colon\im(\sD)\to \im(\sW),\quad -\tfrac12\sD\colon\im(\sW)\to \im(\sD)\] 
are  each others inverse.  The images of $\sW$ and $\sD$ have dimension $\tfrac 1 2m(m-1)$. 
\end{theorem}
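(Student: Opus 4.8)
The plan is to base everything on the representation $\sW(A)=PAP$ from the preceding lemma, where $P:=I-\tfrac1mE$ is symmetric and idempotent (since $E^2=mE$, one checks $P^2=P$) with $Pe=0$, together with the single elementary observation that $\sD(A)$ always has vanishing diagonal, $\sD(A)_{ii}=A_{ii}+A_{ii}-2A_{ii}=0$.

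First I would settle the projection and kernel claims. Idempotency $\sW\circ\sW=\sW$ is immediate from $P^2=P$, and self-adjointness in the Frobenius inner product follows from the cyclic property of the trace, $\langle\sW(A),B\rangle_F=\operatorname{tr}(PAPB)=\operatorname{tr}(APBP)=\langle A,\sW(B)\rangle_F$; hence $\sW$ is an orthogonal projection and $\norm{\sW}_{\op}=1$. For $\sD$, applying the zero-diagonal observation twice gives $\sD(\sD(A))_{ij}=-2\,\sD(A)_{ij}$, i.e. $\sD\circ\sD=-2\sD$, so $-\tfrac12\sD$ is idempotent, hence a projection. For the kernels, $\sD(A)=0$ forces $A_{ij}=\tfrac12(A_{ii}+A_{jj})$, i.e. $A=ev^t+ve^t$ with $v=\tfrac12\diag(A)$, so $\ker\sD=K$; and $\sW(A)=PAP=0$ can be unwound through the splitting $I=P+\tfrac1mE$ to the same family, so $\ker\sW=K$ as well. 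Since $v\mapsto ev^t+ve^t$ is injective, $\dim K=m$, and rank--nullity yields $\dim\im\sW=\dim\im\sD=\tfrac{m(m+1)}2-m=\tfrac12m(m-1)$.

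Next I would verify the intertwining relations and the inverse claim formally. Because $\sW$ is a projection with kernel $K$, every $A$ satisfies $A-\sW(A)=(I-\sW)(A)\in\ker\sW=K=\ker\sD$, so $\sD(A)=\sD(\sW(A))$, giving $\sD=\sD\circ\sW$; and $\sW(\sD(A))=P\,\sD(A)\,P=-2PAP=-2\sW(A)$, since the summands $ed^t,de^t$ of $\sD(A)$ are killed by $P$ on the side carrying $e$, which gives $\sW=-\tfrac12\sW\circ\sD$. Granting these, the inverse statement is purely formal: for $B\in\im\sW$ one has $\sW(-\tfrac12\sD(B))=-\tfrac12(\sW\circ\sD)(B)=\sW(B)=B$, and for $C\in\im\sD$ one has $-\tfrac12\sD(\sW(C))=-\tfrac12\sD(C)=C$, using that each projection fixes its own image.

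The main obstacle is the operator norm $\norm{-\tfrac12\sD}_{\op}=\sqrt m$. Writing $A$ with diagonal vector $d$ and off-diagonal entries $c_{ij}$, the image $-\tfrac12\sD(A)$ is off-diagonal with entries $c_{ij}-\tfrac12(d_i+d_j)$, so $\norm{-\tfrac12\sD(A)}_F^2=2\sum_{i<j}\bigl(c_{ij}-\tfrac12(d_i+d_j)\bigr)^2$ while $\norm{A}_F^2=\norm{d}^2+2\sum_{i<j}c_{ij}^2$. The lower bound comes from the symmetric test matrix $d_i\equiv\alpha$, $c_{ij}\equiv\beta$ with $\alpha=-(m-1)\beta$, which gives ratio exactly $m$. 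For the matching upper bound I would expand the square, bound the cross term by Cauchy--Schwarz, and use the identity $\sum_{i<j}(d_i+d_j)^2=(m-2)\norm{d}^2+(e^td)^2\le(2m-2)\norm{d}^2$ to reduce the problem to maximizing $\sqrt{1-p}+\sqrt{m-1}\,\sqrt p$ over $p=\norm{d}^2\in[0,1]$, whose maximum $\sqrt m$ is attained at $p=(m-1)/m$; the equality cases ($d\parallel e$ and $c_{ij}\propto d_i+d_j$) coincide with the test matrix, confirming the value. Finally, $\sqrt m\neq1$ for $m\ge2$ reconfirms that $-\tfrac12\sD$ is a genuinely non-orthogonal projection.
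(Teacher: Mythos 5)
Your proposal is correct and covers every claim of the theorem, but it gets there by a genuinely different route for the algebraic parts. Where the paper verifies $\sD=\sD\circ\sW$, $\sW=-\tfrac12\sW\circ\sD$ and the kernel description by entry-by-entry expansion, you exploit structure: the factorization $\sW(A)=PAP$ with $P=I-\tfrac1mE$ symmetric, idempotent and $Pe=0$, together with the rank-two form $\sD(A)=de^t+ed^t-2A$ (where $d=\diag A$), turns the composition identities into one-line computations because $P$ annihilates every summand carrying an $e$. Your kernel argument is also a real simplification: you read $\ker\sD$ off directly from $A_{ij}=\tfrac12(A_{ii}+A_{jj})\iff A=ev^t+ve^t$ with $v=\tfrac12\diag A$, whereas the paper computes the harder $\ker\sW$ via an orthogonal-complement decomposition and then transfers it to $\sD$; your sketch for $\ker\sW=K$ via $I=P+\tfrac1mE$ does fill in correctly (one gets $A=ev^t+ve^t$ with $v=\tfrac1mPAe+\tfrac{e^tAe}{2m^2}e$), and note it must be done independently of the intertwining relations since your proof of $\sD=\sD\circ\sW$ uses $\ker\sW=K=\ker\sD$ — your ordering respects this, so there is no circularity. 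Two further clean substitutions: orthogonality of $\sW$ via self-adjointness from trace cyclicity, $\inpr{\sW(A),B}=\operatorname{tr}(PAPB)=\operatorname{tr}(APBP)=\inpr{A,\sW(B)}$, instead of the paper's check that the image is orthogonal to the kernel (rows summing to zero); and non-orthogonality of $-\tfrac12\sD$ from $\norm{-\tfrac12\sD}_{\op}=\sqrt m>1$, instead of the paper's appeal to uniqueness of orthogonal projections with a given kernel. For the operator norm itself your argument is essentially the paper's: your test matrix ($d_i\equiv\alpha$, $c_{ij}\equiv\beta$, $\alpha=-(m-1)\beta$) is a scalar multiple of the paper's $I-E/m$, and both proofs split off the diagonal part, bound its contribution by $(m-1)\norm{d}^2$ (you via $\sum_{i<j}(d_i+d_j)^2=(m-2)\norm{d}^2+(e^td)^2$ and Cauchy--Schwarz, the paper via $A_{ii}A_{jj}\le\tfrac12A_{ii}^2+\tfrac12A_{jj}^2$), and maximize $\sqrt{1-p}+\sqrt{(m-1)p}$ at $p=(m-1)/m$. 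Overall your version is shorter and more conceptual; the paper's is more elementary and self-contained, relying on no operator-theoretic facts such as ``self-adjoint idempotent equals orthogonal projection'' or ``a projection of norm exceeding one is not orthogonal.''
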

The proof of \cref{lem:bijectionbetweenDandW} is deferred to \cref{app:proofofbijectionbetweenDandW}.  

 Let $W:=\E(\widehat W)$ and $D:=\E(\widehat D)$.

\begin{theorem}
It holds that $W=\sW(\Sigma)$, and $D=\sD(\Sigma)$.
\end{theorem}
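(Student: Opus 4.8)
The plan is to reduce both identities to a single fact, namely that $\widehat\Sigma$ is an unbiased estimator of $\Sigma$, after which linearity finishes the argument at once.

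First I would compute $\E(\widehat\Sigma)$ entrywise. From the definition in \cref{eq:definitionhatSigma} and linearity of expectation,
\[
\E(\widehat\Sigma_{ij})=\frac1n\sum_{k=1}^n\E\big[(X_i^k-\mu_k)(X_j^k-\mu_k)\big]=\frac1n\sum_{k=1}^n\Sigma^k_{ij}=\Sigma_{ij},
\]
for all $i,j$, where the middle equality is simply the definition of the $(i,j)$ entry of the per-site covariance matrix $\Sigma^k$ and the final equality is \cref{eq:1}. Hence $\E(\widehat\Sigma)=\Sigma$.

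Next I would invoke the identities $\widehat D=\sD(\widehat\Sigma)$ and $\widehat W=\sW(\widehat\Sigma)$ already recorded in the text, together with the linearity of the operators $\sD$ and $\sW$ on $\mathbb{S}_m$. Since expectation acts entrywise and commutes with any fixed linear map on a finite-dimensional space, we obtain
\[
W=\E(\widehat W)=\E(\sW(\widehat\Sigma))=\sW(\E(\widehat\Sigma))=\sW(\Sigma),
\]
and identically $D=\E(\widehat D)=\E(\sD(\widehat\Sigma))=\sD(\E(\widehat\Sigma))=\sD(\Sigma)$, as claimed.

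There is essentially no genuine obstacle here; the only point that warrants a line of justification is the interchange of expectation with $\sW$ and $\sD$. This is immediate because each entry of $\sW(A)$ and $\sD(A)$ is a deterministic linear combination of the entries of $A$, with coefficients not depending on the data, so pushing $\E$ through is just linearity of expectation applied entry by entry. One could alternatively verify $W=\sW(\Sigma)$ and $D=\sD(\Sigma)$ directly from the closed forms of $\widehat W_{ij}$ and $\widehat D_{ij}$ displayed earlier, but routing everything through the unbiasedness of $\widehat\Sigma$ makes both statements fall out simultaneously and avoids repeating the same algebra.
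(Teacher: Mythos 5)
Your proof is correct and is essentially the same argument the paper intends with its one-line proof (``by linearity of the expectation and the definition of $\sW$ and $\sD$''): you use the identities $\widehat W=\sW(\widehat\Sigma)$, $\widehat D=\sD(\widehat\Sigma)$ already established in the text, the unbiasedness $\E(\widehat\Sigma)=\Sigma$, and the fact that expectation commutes with fixed linear operators. Your write-up simply makes explicit the steps the paper leaves implicit; no gap.
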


\begin{proof}
By linearity of the expectation and the definition of $\sW$ and $\sD$.
\end{proof}

The interpretation of the results are well understood in the case the   populations are related by a tree, see \cref{fig:exampletreewith2and3leaves}.  In this case, it is standard to associate  independent  random variables to the edges and the root of the tree, such that
\begin{equation}\label{eq:normal}
X^k_i = C^k_r + \sum_{e\in \Gamma_{ri}}C^k_e,\quad k=1,\ldots,n,\quad i=1,\ldots,m,
\end{equation} 
where the sum is over all edges $e$ on the unique path $\Gamma_{ri}$ from the root $r$ to population $i$,  $C^k_r$ is the random variable associated the root, and $C^k_e$ the random variable associated the edge $e$.  This model  naturally arises from the normal approximation model, 
$$X_t \sim \text{Normal}\left(x_0,\frac{t}{N_e}x_0(1-x_0)\right)$$
\citep{Donnelly2002},
where $x_0$ is the SNP frequency at time  $0$, $N_e$ is the effective population size, and $X_t$ is the SNP frequency $t$ generations later. The change in frequency   might be found by summing independent increments over different time epochs, leading to the model in \cref{eq:normal}.

Consider the case of $m=2$, and let $\Sigma$ be given as
\begin{equation}\label{eq:sigma}
\Sigma=\begin{pmatrix}
\sigma_1+\tau  & \tau \\ \tau & \sigma_2+\tau
\end{pmatrix},
\end{equation}
where $\sigma_1,\sigma_2,\tau\ge 0$, corresponding to the graph in \cref{fig:exampletreewith2and3leaves}(a). Typically, in an evolutionary context, $\tau$ might be taken to be zero (the variance of the root variable), as  data from the two populations will not contain any information about the evolution of the two populations prior to their most recent common ancestor. However, one might alternatively think of $\tau$ as the variance of the SNP means $\mu_k$, $k=1,\ldots,n$ (to be explored in \cref{sec:cross}).

 It follows from \cref{eq:sigma} and $D=\sD(\Sigma)$ that
$$\sD(\Sigma)=\begin{pmatrix}
0  &\sigma_1+\sigma_2\\ \sigma_1+\sigma_2 &0
\end{pmatrix},$$
from which only the sum $\sigma_1+\sigma_2$ might be recovered. Hence, neither the length of the ``root tip" ($\tau$) nor the placement of the root can be recovered. The elements of the kernel might be seen as operations on the tree, while preserving $\sD(\Sigma)$. This is perhaps best illustrated for $m=3$, in which case we take the tree of \cref{fig:exampletreewith2and3leaves}(b) as starting point, with
$$\Sigma=\begin{pmatrix}
\sigma_{11}  & 0 & 0 \\ 0 & \sigma_{12}+\sigma_2 &  \sigma_{12} \\ 0& \sigma_{12} & \sigma_{12}+ \sigma_3
\end{pmatrix}+\tau E.$$
Adding the kernel element
$$ e\begin{pmatrix}y&0&0\end{pmatrix}+\begin{pmatrix}y\\0\\0\end{pmatrix}e'+(x-y)E=%
xE+ \begin{pmatrix}y  &  0&0\\ 0 &-y&-y \\ 0& -y & -y\end{pmatrix}$$
to $\Sigma$ corresponds  to  extending the outgoing edge from the root with $x$ and sliding  the root by $y$ to the right on the edge, see \cref{fig:exampletreewith2and3leaves}(b) and (c). Choosing $y=\sigma_{12}$ yields a trifurcated star-shaped tree. Now, using other similar kernel elements, the star-shaped tree might be turned into many other trees while preserving the distance matrix
\begin{align*}\label{eq:D}
    &\sD(\Sigma)
= \begin{pmatrix}
        0 & \sigma_1+\sigma_2 & \sigma_1+\sigma_3 \\ \sigma_1+\sigma_2 & 0 & \sigma_2+\sigma_3 \\ 
        \sigma_1 +\sigma_3 & \sigma_2+\sigma_3 & 0
    \end{pmatrix},
\end{align*}
where $\sigma_{11}+\sigma_{12}=\sigma_1$.

The same applies for higher $m>3$ by iteratively applying kernel matrices to move ancestral nodes of the tree while preserving $\sD(\Sigma)$. 

 \begin{figure}
\centering{\includegraphics[width=\textwidth]{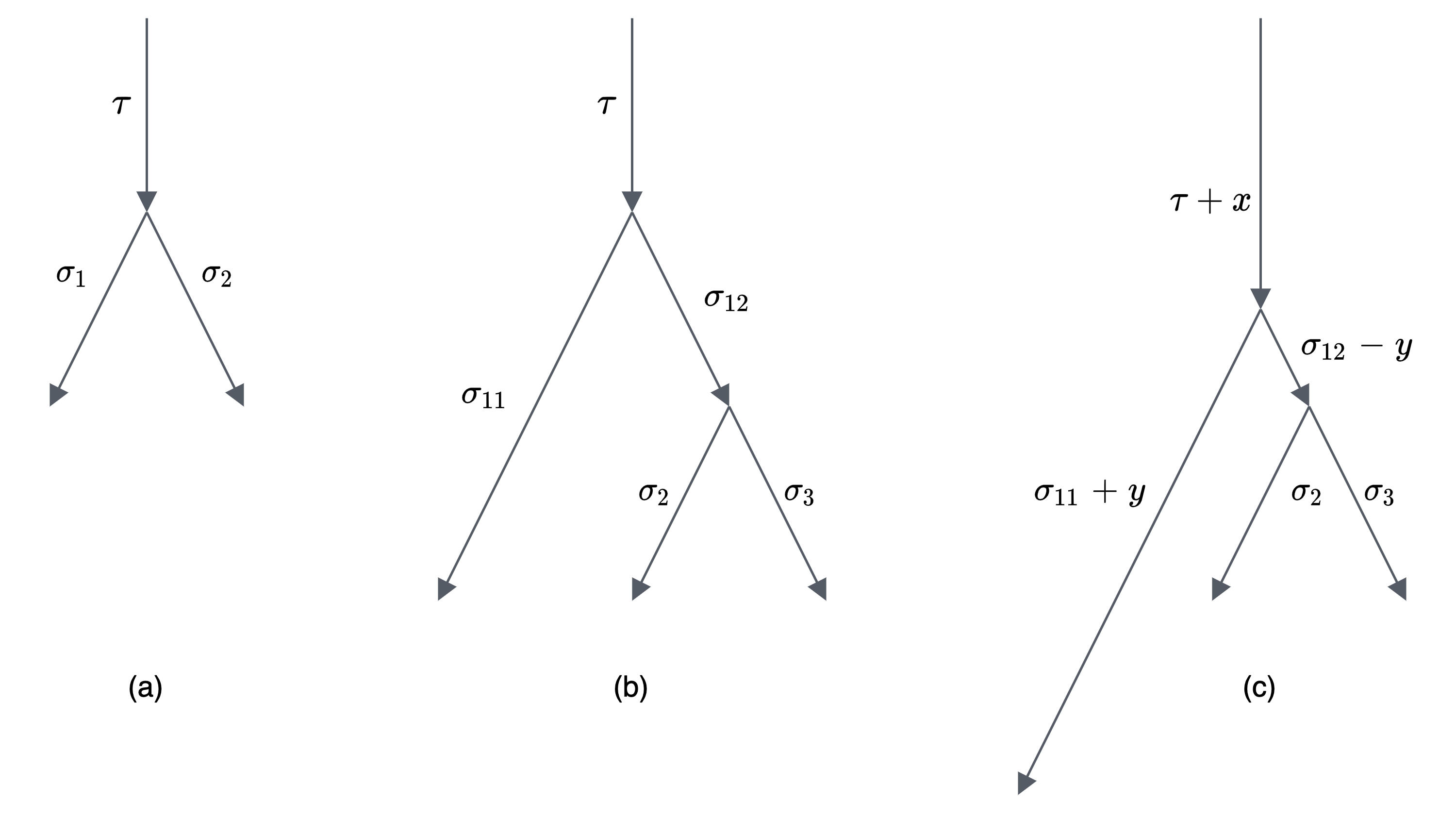}} \caption{A tree with two (a) and three (b) leaves. In (c), the length $\tau$ of the root tip is extended by $x$ and slided $y$ to the right.  We refer to the full \emph{edge labelled} tree as a \emph{rooted tree with a root tip}; the tree without the root tip but with the placement of the root as a \emph{rooted tree without the root tip}; and the tree without the root tip and the placement of the root as an \emph{unrooted tree}. In the latter case, the branches labeled $\sigma_{11}$ and $\sigma_{1}$ in (b) are replaced by a single branch of length $\sigma_{11}+\sigma_{12}$.} \label{fig:exampletreewith2and3leaves}
 \end{figure}

\section{A new statistic}

We suggest a third symmetric statistic, that carries more information about $\Sigma$ than the other two statistics. It is defined by 
\begin{align*}
\widehat V &= \frac1n\sum_{k=1}^n X^k(X^k)^t - \widehat\mu_k^2E.
\end{align*}
The second term is a correction term that makes the expectation of $\widehat V$ independent of the mean $\mu_k$. Also, it ensures the sum of all entries of $\widehat V$ is zero. In contrast to the other two statistics, $\widehat V$  is not linear in $\widehat\Sigma$. However, if we define $Y=\frac1n\sum_{k=1}^n X^k(X^k)^t $, then $\widehat W = \sW(Y)$, $\widehat D = \sD(Y)$ and $\widehat V = \sV(Y)$, where $V$ is defined in \cref{eq:definitionV}. 

By inspection, it holds that $\sW(\widehat V)= \widehat W$ and $\sD(\widehat V)=\widehat D$. Let $\sV\colon \mathbb{S}_m\to \mathbb{S}_m$ be defined as 
\begin{equation}\label{eq:definitionV}
\sV(A)=A - \frac1{m^2} EA E.
\end{equation}
 The following holds. 

\begin{theorem}
The operator $\sV$ is an orthogonal projection, $\sV=\sV\circ\sV$ with kernel $\set{\lambda E\colon \lambda\in\R }$. In particular, $\sV$ has operator norm one. 
 Furthermore, $V:=\E( \widehat V) =\sV(\Sigma)$.  
\end{theorem}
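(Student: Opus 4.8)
The plan is to handle the three structural claims about $\sV$ first and the identity $V=\sV(\Sigma)$ last, since the former all reduce to the single algebraic fact that $E=ee^t$ satisfies $E^2=mE$ and $EAE=(e^tAe)\,E$ for every $A\in\mathbb{S}_m$. I would open by recording this identity, which rewrites $\sV$ in the transparent form $\sV(A)=A-\tfrac{1}{m^2}(e^tAe)\,E$; that is, $\sV$ subtracts from $A$ the matrix $E$ scaled so as to match the total entry sum of $A$.

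From this form the idempotency is immediate: a one-line computation gives $e^t\sV(A)e=e^tAe-\tfrac{1}{m^2}(e^tAe)(e^tEe)=0$ because $e^tEe=m^2$, so $\sV(A)$ always has vanishing total entry sum and applying $\sV$ a second time leaves it unchanged, i.e. $\sV\circ\sV=\sV$. The kernel is read off just as quickly: $\sV(A)=0$ forces $A=\tfrac{1}{m^2}(e^tAe)E$, a scalar multiple of $E$, and conversely $\sV(\lambda E)=\lambda E-\tfrac{1}{m^2}(\lambda m^2)E=0$, so $\ker\sV=\{\lambda E:\lambda\in\R\}$.

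To upgrade ``projection'' to ``orthogonal projection'' I would identify the image with the orthogonal complement of the kernel under the Frobenius inner product $\inpr{A,B}=\sum_{a,b}A_{ab}B_{ab}$. The crucial observation is that $\inpr{A,E}$ equals the total entry sum $e^tAe$; hence the image $\im\sV=\{A:e^tAe=0\}$ computed above is exactly $(\mathrm{span}\,E)^\perp=(\ker\sV)^\perp$. A projection whose kernel and image are orthogonal complements is an orthogonal projection, and the operator-norm-one claim then follows from the fact, recorded in the notation section, that orthogonal projections have operator norm one (here $\sV\neq 0$, e.g. $\sV(I)=I-\tfrac1mE\neq 0$).

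The remaining identity $V=\sV(\Sigma)$ is where the real bookkeeping lives, and I expect it to be the main obstacle, precisely because it must exhibit the cancellation of the nuisance means $\mu_k$ that the correction term $\widehat\mu_k^2E$ was designed to produce. Writing $\widehat\mu_k=\tfrac1m e^tX^k$ so that $\widehat\mu_k^2=\tfrac1{m^2}e^tX^k(X^k)^te$, I would take expectations term by term using $\E[X^k(X^k)^t]=\Sigma^k+\mu_k^2E$. The first term contributes $\tfrac1n\sum_k(\Sigma^k+\mu_k^2E)$, while $\E[\widehat\mu_k^2]=\tfrac1{m^2}e^t(\Sigma^k+\mu_k^2E)e=\tfrac1{m^2}e^t\Sigma^ke+\mu_k^2$ (again using $e^tEe=m^2$), so the correction term contributes $-\tfrac1n\sum_k\bigl(\tfrac1{m^2}(e^t\Sigma^ke)E+\mu_k^2E\bigr)$. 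The $\mu_k^2E$ pieces cancel, leaving $\E(\widehat V)=\Sigma-\tfrac1{m^2}(e^t\Sigma e)E$, which is exactly $\sV(\Sigma)$ by the simplified form from the first step.
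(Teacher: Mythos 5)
Your proposal is correct and follows essentially the same route as the paper's proof: idempotency via $e^t\sV(A)e=0$, the kernel via the observation that $A-\sV(A)$ is a multiple of $E$, orthogonality via the Frobenius pairing of the image with $E$, and the expectation identity via $\E[X^k(X^k)^t]=\Sigma^k+\mu_k^2E$ together with $\E[\widehat\mu_k^2]=\mu_k^2+e^t\Sigma^ke/m^2$. The only cosmetic difference is that you state the identity $EAE=(e^tAe)E$ explicitly at the outset, which the paper uses implicitly.
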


\begin{proof}
Let $V=\sV(A)$. Note that $e^tVe=0$. Hence, $\sV(V)=V$, which implies $\sV\circ\sV=\sV$. It follows from  \cref{eq:definitionV}, that $\sV(E)=0$, but also that $A$ differs from $\sV(A)$ by a constant times $E$. Hence $\ker(\sV)=\set{\lambda E:\lambda\in\R}$. 
Note that an $m\x m$-matrix $F$ is orthogonal to $E$ if and only if $\sum_{i=1}^m\sum_{j=1}^mF_{ij}=0$. We have
 \[\sum_{i=1}^m\sum_{j=1}^m \sV(A)_{ij}=\sum_{i=1}^m\sum_{j=1}^m A_{ij} - \sum_{i=1}^m\sum_{j=1}^m A_{ij}=0.\]
Hence, $\sV$ is an orthogonal projection. In particular, $\sW$ has operator norm one, for $m\ge 2$. 
Define  $\widehat V^k = X^k(X^k)^t - \hat\mu_k^2E$, such that $\widehat V = \frac1n\sum_{k=1}^n \widehat V^k$. 
Note that 
\begin{align*}
	\E (\hat\mu_k^2)=\frac1{m^2} \sum_{a=1}^m\sum_{b=1}^m\E(X_a^kX_b^k) \en 	\E(X^k(X^k)^t)= \Sigma^k + \mu_k^2E,
\end{align*}
 Then, $\E(\hat\mu_k^2)=\mu_k^2 + e^t\Sigma^k e/(m^2)$, and
\begin{align*}
	\E(\widehat V) = \frac1n\sum_{k=1}^n \Sigma^k - \frac1{m^2}(e^t\Sigma^ke) E=\Sigma - \frac1{m^2}E\Sigma E, 
\end{align*}
and the proof is complete.
\end{proof}

In the case of a tree,  the placement of the root is identifiable from $V=\sV(\Sigma)$, but not the length of the root tip. For $m=2$, we find
\begin{equation}\label{eq:m=2}
\sV(\Sigma)=\frac14\begin{pmatrix}
3\sigma_1-\sigma_2 &-\sigma_1-\sigma_2 \\ -\sigma_1-\sigma_2  & 3\sigma_2-\sigma_1
\end{pmatrix},
\end{equation}
from which $\sigma_1,\sigma_2$ might be recovered.   The general statement for arbitrary $m$ is here:

\begin{theorem}
    The  unrooted tree is identifiable from $D$ (or $W$). The rooted tree without the root tip is identifiable from $V$. The rooted  tree with the root tip is identifiable from $\Sigma$.   
\end{theorem}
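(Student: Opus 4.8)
The plan is to reduce the three claims to two ingredients: an explicit formula for the covariance matrix induced by a tree, and a uniqueness statement for a point of a tree in terms of its distances to the leaves. Throughout, ``identifiable from $X$'' means that the map sending a tree to $X$ is injective on trees with positive edge variances. First I would fix the tree parametrisation. Writing $\tau=\frac1n\sum_k\var(C^k_r)$ for the (averaged) root variance, $\sigma_e=\frac1n\sum_k\var(C^k_e)$ for the edge variances, $L_i=\sum_{e\in\Gamma_{ri}}\sigma_e$ for the total length from the root $r$ to leaf $i$, and $R_{ij}=\sum_{e\in\Gamma_{ri}\cap\Gamma_{rj}}\sigma_e$ for the length from $r$ to the most recent common ancestor of $i$ and $j$, the independence of the variables in \cref{eq:normal} gives $\Sigma_{ij}=\tau+R_{ij}$ and $\Sigma_{ii}=\tau+L_i$ (note $R_{ii}=L_i$). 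Substituting into $\sD$ yields $D_{ij}=L_i+L_j-2R_{ij}$, which is precisely the path length between leaves $i$ and $j$ in the edge-weighted tree; thus $D$ is an additive tree metric. For Part~1 I would then invoke the classical uniqueness theorem for additive metrics (the four-point condition of Buneman, cf.\ Semple and Steel): a tree metric on $m$ leaves is realised by a unique edge-weighted unrooted tree with positive edge lengths, giving identifiability of the unrooted tree from $D$. That $W$ carries the same information follows from \cref{lem:bijectionbetweenDandW}, since $D=\sD(\Sigma)=\sD(\sW(\Sigma))=\sD(W)$.

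For Part~2, $V=\sV(\Sigma)$ determines $\Sigma$ only modulo $\ker\sV=\set{\lambda E\colon\lambda\in\R}$. Writing $R=(R_{ij})$ and using $\Sigma=\tau E+R$ together with $\sV(E)=0$, we get $V=\sV(R)$, so $\tau$ is invisible to $V$ and the root tip is genuinely non-identifiable. What $V$ does pin down is $R$ modulo multiples of $E$, hence the diagonal $(R_{ii})_i=(L_i)_i$ up to one common additive constant; and since $\sD(\widehat V)=\widehat D$ gives $D=\sD(V)$, it also pins down the unrooted tree. The crux is then a lemma I would isolate and prove: on a tree $T$ (the convex hull of its leaves), the map sending a point $p\in T$ to the vector of leaf distances $\rh{d(p,i)}_i$, taken modulo the all-ones vector $e$, is injective. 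Granting it, the quotiented-out degree of freedom is exactly $\tau$, the root position $p$ is uniquely located on the already-reconstructed unrooted tree, and rooting at $p$ fixes the topology and every branch length below the root, i.e.\ the rooted tree without the root tip.

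For Part~3 I would recover, from $\Sigma$ alone, the unrooted tree via $D=\sD(\Sigma)$ and the root position $p$ via $V=\sV(\Sigma)$, as in Parts~1 and~2. Together these determine the whole matrix $R$ from the reconstructed geometry with no residual ambiguity, so the root tip is read off as $\tau=\Sigma_{ii}-d(p,i)$ (equivalently $\tau=\Sigma_{ij}-R_{ij}$), a value forced to agree across all $i,j$. This identifies the rooted tree with the root tip and completes the proof.

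The main obstacle is the injectivity lemma underlying Part~2; everything else is either the routine tree computation above or an appeal to the additive-metric theorem. I would prove it by a local-to-global argument. Moving $p$ a small distance $t$ along an edge toward the side carrying a leaf set $S$ decreases $d(p,i)$ by $t$ for $i\in S$ and increases it by $t$ for $i\notin S$; hence $d(p,i)-d(p,j)$ strictly changes whenever $i$ and $j$ lie on opposite sides, so no motion of $p$ acts as a pure additive shift of $\rh{d(p,i)}_i$. Globally, if $d(p,\cdot)$ and $d(p',\cdot)$ differed by a constant with $p\neq p'$, comparing leaves lying on the two sides of the $p$--$p'$ path would force the difference to be non-constant, a contradiction; here one uses that every point of the convex hull of the leaves has leaves strictly on both sides. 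The care needed is only in the degenerate configurations, $p$ at an internal vertex or on a pendant edge, where I would check directly that some leaf-distance difference is still perturbed.
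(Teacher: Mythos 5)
Your proposal is correct, but for the second and third claims it takes a genuinely different route from the paper. For the first claim both arguments are the same appeal to the classical uniqueness theorem for additive tree metrics, after noting $D_{ij}=L_i+L_j-2R_{ij}$ is the leaf-to-leaf path length. After that the orders reverse. The paper proves the \emph{third} claim first, by a single augmentation trick: it forms an $(m+1)\times(m+1)$ distance matrix whose upper block is $\sD(\Sigma)$ and whose last row/column is $(\Sigma_{11},\ldots,\Sigma_{mm},0)$, i.e.\ it treats the far end of the root tip as an $(m+1)$th leaf (its distance to leaf $i$ is exactly $\Sigma_{ii}=\tau+L_i$), and then applies the same unrooted-tree identifiability theorem to $m+1$ leaves; the resulting leaf- and root-preserving isomorphism identifies the rooted tree \emph{with} its tip. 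The second claim is then deduced from the third: if two trees share $V$, their covariance matrices differ by $\lambda E$, and lengthening one root tip by $\lambda$ reduces to the case of equal $\Sigma$. You instead prove the second claim directly, via a new geometric lemma --- a point $p$ in the convex hull of the leaves is determined by its leaf-distance vector $\left(d(p,i)\right)_i$ modulo multiples of $e$ --- and then obtain the third claim by reading off $\tau=\Sigma_{ii}-d(p,i)$. Your lemma is true and your global argument for it is sound: picking a leaf beyond $p$ (away from $p'$) and a leaf beyond $p'$ (away from $p$) forces the alleged constant to equal both $-d(p,p')$ and $+d(p,p')$. What each approach buys: the paper's is more economical, since one extra application of the classical theorem does all the work and no new lemma is needed, and it handles the tip length automatically; yours makes the second claim self-contained (it does not pass through the third), and it isolates geometrically why exactly one degree of freedom, $\tau$, is invisible to $V$. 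The price is that you must actually carry out the proof of the injectivity lemma, including the degenerate configurations you defer ($p$ a leaf, or on a pendant edge), and both proofs share the implicit standing assumption that the root is a branching point (degree at least two), i.e.\ lies in the convex hull of the leaves --- otherwise the edge above a degree-one root is confounded with the root tip and neither argument (nor the theorem as stated) can apply.
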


\begin{proof}
The first statement is well known in literature  \cite[theorem 7.1.8, page 148]{semplesteel2003}. By \cref{lem:bijectionbetweenDandW}, it also holds true for $W$.

Let $\sT_1=(\sV_1,\sE_1)$  and $\sT_2=(\sV_2,\sE_2)$   be two rooted trees  with vertex  sets $\sV_1,\sV_2$, respectively, and edge sets $\sE_1$, $\sE_2$, respectively, and   common $m\times m$ covariance matrix $\Sigma$,.  Define the $(m+1)\x (m+1)$ matrix $A$ as follows 
\[
\begin{pmatrix}
\ddots & & &&\Sigma_{11}\\
&    \sD(\Sigma) & &&\vdots\\
&&\ddots &&\Sigma_{mm}\\
\Sigma_{11}&\ldots & \Sigma_{mm}& &0 \\
\end{pmatrix}.
\] 
The element $A_{ij}$ is the distance between the leaves $i$ and $j$ of $\sT_1$ (or $\sT_2$) for $i,j\le m$, and $A_{i,m+1}=A_{m+1,i}=\Sigma_{ii}$ is the distance from the root to the leaf $i$. If we consider the root  as another `leaf', then there is a isomorphism $\varphi\colon\sV_1\to \sV_2$, such that $x-y$ is an edge in $\sV_1$ if and only if $\varphi(x)-\varphi(y)$ is an edge in $\sT_2$ (indifferent of the direction). Moreover, the length of $x-y$ is equal to the length of $\varphi(x)-\varphi(y)$, and $\varphi$ maps the $i$th leaf of $\sT_1$ to the $i$th leaf of $\sT_2$, and the root of $\sT_1$ to the root of $\sT_2$. It follows that $\sT_1$ and $\sT_2$ are isomorphic as directed labelled trees.

Let $\sT_1$ and $\sT_2$ be two trees with the same matrix $V$. Let $\Sigma_1$ and $\Sigma_2$ be their corresponding covariance  matrices. Then, $\Sigma_1-\Sigma_2=\lambda E$ for some $\lambda$. Without loss of generality, $\lambda\ge 0$. If we make the root tip of $\sT_2$ $\lambda$ longer, resulting in a tree $\sT_2'$, with corresponding covariance matrix $\Sigma_2'$,  then $\Sigma_1-\Sigma_2'=0$. It follows that $\sT_1$ and $\sT_2'$ are isomorphic. Consequently, it follows that $\sT_1$ and $\sT_2$ are equal except for the length of the root tip.  
\end{proof}

 The following theorem relates $\sV$ with $\sD$ and $\sW$, analogous to \cref{lem:bijectionbetweenDandW}.  Simple examples show that $\sV\circ\sD\neq \sD$.

\begin{theorem}\label{thm:Vestimator}
It holds that $\sV\circ\sW=\sW\circ\sV=\sW$ and $\sD\circ\sV=\sD$. 
\end{theorem}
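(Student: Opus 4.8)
The plan is to reduce everything to two elementary facts about the all-ones structure. Write $P = I - \tfrac1m E$ for the centering matrix appearing in \cref{eq:WA}, so that $\sW(A) = PAP$. Since $E = ee^t$ and $e^te = m$, one has $E^2 = mE$ and hence $Pe = e - \tfrac1m Ee = e - e = 0$; by symmetry $e^tP = 0$ as well. I would then record the two consequences I actually need. First, $PE = EP = 0$, so $\sW(E) = PEP = 0$. Second, for any $A \in \mathbb{S}_m$ we have $EAE = e(e^tAe)e^t = (e^tAe)E$, which shows that $\sV(A) = A - \tfrac{1}{m^2}EAE = A - \tfrac{e^tAe}{m^2}E$ differs from $A$ only by a scalar multiple of $E$. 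I would also note that $\sD(E) = 0$, either by the direct computation $E_{ii} + E_{jj} - 2E_{ij} = 1 + 1 - 2 = 0$, or by observing that $E = ee^t$ lies in the common kernel $K$ of $\sD$ and $\sW$ from \cref{lem:bijectionbetweenDandW} (take $v = \tfrac12 e$).

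With these in hand, the identities $\sW\circ\sV = \sW$ and $\sD\circ\sV = \sD$ are immediate from linearity: writing $c = e^tAe/m^2$, we get $\sW(\sV(A)) = \sW(A) - c\,\sW(E) = \sW(A)$ and $\sD(\sV(A)) = \sD(A) - c\,\sD(E) = \sD(A)$, since $\sW(E) = \sD(E) = 0$. The point is simply that $\sV$ perturbs its argument by a multiple of $E$, and $E$ is annihilated by both $\sW$ and $\sD$.

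The one direction that needs a little more care is $\sV\circ\sW = \sW$, which asserts that every matrix in $\im(\sW)$ is already fixed by $\sV$. I would prove this by showing $\sW(A)$ lies in $\im(\sV)$. Recall that $\sV$ is the orthogonal projection with kernel $\set{\lambda E : \lambda\in\R}$, so its image consists exactly of the matrices orthogonal to $E$, i.e.\ those whose entries sum to zero. For $\sW(A) = PAP$ the total entry sum is $e^t\sW(A)e = e^tPAPe = (Pe)^tA(Pe) = 0$, because $Pe = 0$. Hence $\sW(A) \in \im(\sV)$, and since $\sV$ acts as the identity on its image, $\sV(\sW(A)) = \sW(A)$.

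I do not anticipate a genuine obstacle: the whole statement rests on $Pe = 0$ together with the observation that $\sV$ only ever changes a matrix by a multiple of $E$. The only step worth stating carefully is the last one, where I invoke the characterisation of $\im(\sV)$ as the entry-sum-zero matrices rather than manipulating the composition $\sV\circ\sW$ entrywise.
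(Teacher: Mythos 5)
Your proof is correct and takes essentially the same route as the paper's: you derive $\sW\circ\sV=\sW$ and $\sD\circ\sV=\sD$ from the observation that $\sV(A)-A$ is a scalar multiple of $E$, which is annihilated by both $\sW$ and $\sD$, and you derive $\sV\circ\sW=\sW$ from $e^t\sW(A)e=0$, exactly as the paper does. The only difference is one of explicitness, since you spell out the supporting computations ($PE=0$, $EAE=(e^tAe)E$, $\sD(E)=0$, and the characterisation of $\im(\sV)$ as the entry-sum-zero matrices) that the paper leaves implicit.
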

\begin{proof}
Let $A\in\mathbb{S}_m$ and let $V=\sV(A)$. Note that $A$ and $V$ differ only by a constant times $E$. As $E$ is in the kernel of $\sW$ and $\sD$, we have $\sW(V)=\sW(A)$ and $\sD(V)=\sD(A)$. This proves $\sD\circ\sV=\sD $ and $\sW\circ\sV = \sW$.

Note that $e^t\sW(A)e=e^t(I-E/m)A(I-E/m)e=0$. It follows that $\sV(\sW(A))=\sW(A)$. Hence, $\sV\circ\sW=\sW$. 
\end{proof}

We end by showing consistency of the statistic $\widehat V$, assuming (almost)  independence between sites, large $n$ and not too large $m$.

\begin{theorem}\label{thm:convergenceofhatV}
Assume $X^1,\ldots,X^n$ are random vectors in $\R^{m}$  with mean $\mu_k e$ and $m\x m$ covariance matrix $\Sigma^k$, and that there is  an integer $t\ge 1$, such that $X^k$ and $X^\ell$ are independent whenever $|k-\ell|\ge t$. Moreover, if there exists a constant $C>0$, such that the  forth moments of $X_i^k$, $k=1,\ldots,n$, $i=1,\ldots,m$,  are smaller than $C$, then 
$$\E(\|\widehat V-V\|_F)\le4\sqrt{\frac{m^2 t}n },$$
$$\E(\|\widehat D-D\|_F)\le 16\sqrt{\frac{m^2 t}n C},\quad \E(\|\widehat W-W\|_F)\le 16\sqrt{\frac{m^2 t}n C},$$
 for any $m,n,t$.
\end{theorem}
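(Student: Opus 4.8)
The plan is to reduce the three Frobenius-norm bounds to entrywise variance estimates and then assemble them. Since $\sqrt{\cdot}$ is concave, Jensen's inequality gives $\E\norm{\widehat T-T}_F\le\sqrt{\E\norm{\widehat T-T}_F^2}$ for each $T\in\set{V,D,W}$, and because each statistic is unbiased for its target ($\E\widehat V=V$, $\E\widehat D=D$, $\E\widehat W=W$ by the preceding theorems), we have $\E\norm{\widehat T-T}_F^2=\sum_{i,j=1}^m\var(\widehat T_{ij})$. So it suffices to control the $m^2$ entrywise variances. I would write each statistic as a per-site average, $\widehat T=\frac1n\sum_{k=1}^n\widehat T^k$, with $\widehat V^k=X^k(X^k)^t-\widehat\mu_k^2E$, $\widehat D^k_{ij}=(X_i^k-X_j^k)^2$, and $\widehat W^k_{ij}=(X_i^k-\widehat\mu_k)(X_j^k-\widehat\mu_k)$, so that $\var(\widehat T_{ij})=\frac1{n^2}\sum_{k,\ell=1}^n\cov(\widehat T^k_{ij},\widehat T^\ell_{ij})$.

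Next I would exploit the $t$-dependence: whenever $|k-\ell|\ge t$ the vectors $X^k,X^\ell$ are independent, hence $\cov(\widehat T^k_{ij},\widehat T^\ell_{ij})=0$, and only the at most $n(2t-1)\le 2nt$ pairs with $|k-\ell|<t$ survive. Bounding each by Cauchy--Schwarz, $\abs{\cov(\widehat T^k_{ij},\widehat T^\ell_{ij})}\le\max_k\var(\widehat T^k_{ij})$, collapses everything to a single per-site second-moment estimate, and this is exactly where the fourth-moment hypothesis enters: by Hölder, $\E[(X_a^k)^2(X_b^k)^2]\le\sqrt{\E[(X_a^k)^4]\E[(X_b^k)^4]}\le C$ for all $a,b$. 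For $\widehat D$ one expands $(X_i^k-X_j^k)^4$ by the binomial theorem and bounds each of the mixed fourth moments by $C$, giving $\var(\widehat D^k_{ij})\le\E[(X_i^k-X_j^k)^4]\le 16C$; for $\widehat W$ the analogous expansion of $(X_i^k-\widehat\mu_k)^2(X_j^k-\widehat\mu_k)^2$, after substituting $\widehat\mu_k=\frac1m\sum_aX_a^k$ and using convexity of $x\mapsto x^4$ to get $\E[\widehat\mu_k^4]\le C$, yields a bound of the same order, $\var(\widehat W^k_{ij})\le(\text{const})\,C$. For $\widehat V$ the same fourth-moment reasoning gives $\var(\widehat V^k_{ij})\le 4C$; the $C$-free constant in the stated $\widehat V$-bound reflects the bounded-frequency setting, where $\widehat V^k_{ij}=X_i^kX_j^k-\widehat\mu_k^2\in[-1,1]$ has variance bounded by an absolute constant.

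Assembling, $\sum_{i,j}\var(\widehat T_{ij})\le m^2\cdot\frac1{n^2}\cdot 2nt\cdot\max_k\var(\widehat T^k_{ij})=\frac{2m^2t}{n}\max_k\var(\widehat T^k_{ij})$, and the square root produces the announced $\sqrt{m^2t/n}$ scaling, the loose constants $4$ and $16$ (and the factor $\sqrt C$ for $D,W$) absorbing the crude per-site estimates. I would deliberately \emph{not} route $\widehat D,\widehat W$ through the identities $\widehat D=\sD(\widehat V)$, $\widehat W=\sW(\widehat V)$ together with $\norm{\sD}_{\op}=2\sqrt m$, since that would cost an extra factor $\sqrt m$ and spoil the stated $m^2$ (rather than $m^3$) dependence; the direct entrywise route keeps the correct power of $m$, with the $m^2$ arising solely from summing over the $m^2$ matrix entries. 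The main obstacle is the per-site bookkeeping for $\widehat W$: because $\widehat\mu_k$ mixes all $m$ coordinates, the estimate of $\var(\widehat W^k_{ij})$ requires expanding a product of two empirically-centered factors and applying Hölder to many cross terms, and one must verify that the resulting constant is $m$-free so that only the overall $m^2$ from the entry sum remains. Carefully tracking these constants (and confirming the $\widehat V$ variance is genuinely $C$-free in the frequency regime) is the delicate part; the $t$-dependence truncation and the Jensen step are routine.
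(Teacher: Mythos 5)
Your proposal is sound and reaches the stated bounds (with room to spare), but it is organized differently from the paper's proof in two respects. First, the handling of $t$-dependence: you count the at most $n(2t-1)\le 2nt$ covariance pairs with $|k-\ell|<t$ and bound each by Cauchy--Schwarz, whereas the paper splits the index set into $t$ residue classes modulo $t$, so that each class consists of mutually independent summands, and then recombines the $t$ class averages with the elementary inequality $\var\bigl(\sum_{k=1}^t Y_k\bigr)\le t\sum_{k=1}^t\var(Y_k)$ (its auxiliary Corollary). Both give $\var(\widehat T_{ij})=O(tC/n)$; your count produces $2t-1$ where the blocking produces $t$, a harmless factor at most $2$. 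Second, and more substantively, for $\widehat D$ and $\widehat W$ the paper does \emph{not} redo per-site moment bookkeeping: it reuses the $\widehat V$ variance bound through the entrywise identities $\widehat D_{ij}=\widehat V_{ii}+\widehat V_{jj}-2\widehat V_{ij}$ and $\widehat W_{ij}=\widehat V_{ij}-\frac1m\sum_{a}\widehat V_{ia}-\frac1m\sum_{a}\widehat V_{ja}+\frac1{m^2}\sum_{a,b}\widehat V_{ab}$, bounding each of the four blocks (which are averages, hence no $m$ penalty) by the same Corollary. So your worry that routing through $\sD,\sW$ costs a factor $\|\sD\|_{\op}=2\sqrt m$ applies only to the global operator-norm route; the paper's \emph{entrywise} use of these identities preserves the $m^2$ dependence, at the price of cruder constants ($256\,tC/n$ per entry, hence the $16$), while your direct per-site estimates give roughly $32\,tC/n$, i.e.\ about $5.7$ in place of $16$. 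In short, your route buys sharper constants at the cost of the $\widehat W^k$ expansion (which does check out: $\E[\widehat\mu_k^4]\le C$ and H\"older give an $m$-free bound such as $\var(\widehat W^k_{ij})\le 16C$); the paper's route buys minimal bookkeeping, since only $\widehat V$ ever requires an honest fourth-moment estimate.

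One caution about the $\widehat V$ bound. The $C$-free form in the theorem statement is evidently a typo: the paper's own proof produces $4\sqrt{m^2tC/n}$, and under the stated hypotheses (arbitrary random vectors with fourth moments below $C$) a $C$-free bound is impossible by scaling, since replacing $X^k$ by $\lambda X^k$ scales $\widehat V-V$ by $\lambda^2$ but $C$ by $\lambda^4$. So you should not try to rescue it via the frequency argument $\widehat V^k_{ij}\in[-1,1]$, which is not available under the theorem's hypotheses; your $C$-dependent per-site estimate $\var(\widehat V^k_{ij})\le 4C$ is the right statement, and it yields $2\sqrt2\,\sqrt{m^2tC/n}\le 4\sqrt{m^2tC/n}$, matching (indeed improving) what the paper actually proves.
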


We defer the proof of \cref{thm:convergenceofhatV} to  \cref{subsec:proofthmconvergenceofhatV}.  
If $X^1,\ldots,X^n$, are frequencies, then the boundedness assumption is naturally met. The bound provides means to establish convergence in Frobenious norm as $n,m$ become large, and highlights the individual importance of $m, n, t$, respectively.

\section{Least Square Estimation}

In TreeMix \citep{PickrellPritchard2012}, the basic observation is $\widehat W$ from which parameters are estimated, for example, assuming  the populations are related by a tree. 
One might alternatively take $\widehat V$ to be the basic observation. We pose the question whether the parameter estimates obtained from $\widehat W$ and $\widehat V$, respectively,   are compatible?

A natural estimation procedure is Least Square (LS) estimation, which we will consider here. (We note that TreeMix in principle uses  weighted LS estimation, where  the weights are  empirically obtained.)
Let $L$ be a linear subspace of $\sV(\mathbb{S}_m)$ and $H\colon V\in L\subseteq \sV(\mathbb{S}_m)$   a linear hypothesis about $V$. 
We define the LS estimator of $V$ under  $H$ by
$$\widehat V_ L=\text{argmin}_{A\in L}\| \widehat V-A\|_F. $$
Similarly, one might estimate $W$  from $\widehat W$ under the corresponding linear hypothesis  $H'\colon W\in \sW(L)\subseteq \sW(\SS_m)$,
$$\widehat W_L=\text{argmin}_{A\in \sW(L)}\| \widehat W-A\|_F. $$
In either case, the LS estimator is the projection of the observation $\widehat V$ (respectively, $\widehat W$) onto the linear space $L$ (respectively, $\sW(L)$).

\begin{theorem}\label{thm:orthogonal}
    Let $L\subseteq\sV(\SS_m)$ be a linear subspace.     If $\sW(L)\subseteq L$, then $\widehat W_L=\sW(\widehat V_L)$. 
  Additionally, $\widehat W_L= \argmin _{A\in L}\|\widehat W- A\|_F$.
\end{theorem}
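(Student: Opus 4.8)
The plan is to exploit the fact that both $\widehat V_L$ and $\widehat W_L$ are orthogonal projections (in Frobenius inner product) onto $L$ and $\sW(L)$ respectively, and then to show that applying $\sW$ to the first projection lands us precisely on the second. First I would record the defining property of the LS estimator as an orthogonal projection: $\widehat V_L$ is the unique element of $L$ such that $\widehat V - \widehat V_L$ is orthogonal to every element of $L$ in the Frobenius inner product $\langle A,B\rangle_F = \sum_{a,b} A_{ab}B_{ab}$. The key structural inputs are that $\sW$ is an orthogonal projection (from \cref{lem:bijectionbetweenDandW}), that $\sW(L)\subseteq L$ by hypothesis, and that $\sV(\widehat V)=\widehat V$ together with $\sW\circ\sV=\sW$ (from \cref{thm:Vestimator}), so that $\sW(\widehat V)=\widehat W$.

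Next I would verify that $\sW(\widehat V_L)$ is a legitimate candidate, i.e.\ that it lies in $\sW(L)$; this is immediate since $\widehat V_L\in L$. The heart of the argument is to check the orthogonality condition characterising $\widehat W_L$: I must show that $\widehat W - \sW(\widehat V_L)$ is orthogonal to every $B\in \sW(L)$. Writing $\widehat W = \sW(\widehat V)$ and using that $\sW$ is a self-adjoint projection (orthogonal projections are self-adjoint, $\langle \sW(X),Y\rangle_F = \langle X,\sW(Y)\rangle_F$), I compute, for $B=\sW(B')$ with $B'\in L$,
\begin{align*}
\langle \widehat W - \sW(\widehat V_L),\, B\rangle_F
&= \langle \sW(\widehat V - \widehat V_L),\, \sW(B')\rangle_F\\
&= \langle \widehat V - \widehat V_L,\, \sW(B')\rangle_F,
\end{align*}
where the last step uses self-adjointness and $\sW\circ\sW=\sW$. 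Since $\sW(B')\in \sW(L)\subseteq L$, the defining orthogonality of $\widehat V_L$ forces this inner product to vanish. Hence $\sW(\widehat V_L)$ satisfies the normal equations defining the projection onto $\sW(L)$, and by uniqueness $\widehat W_L = \sW(\widehat V_L)$.

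The main obstacle, and the step deserving the most care, is the containment hypothesis $\sW(L)\subseteq L$: without it the vector $\sW(B')$ need not lie in $L$, and the orthogonality of $\widehat V-\widehat V_L$ could not be invoked. I would flag explicitly why this hypothesis is exactly what makes the two optimisation problems commute under $\sW$. For the additional claim $\widehat W_L = \argmin_{A\in L}\|\widehat W - A\|_F$, I would argue that minimising over $L$ rather than $\sW(L)$ cannot give anything new: because $\widehat W=\sW(\widehat W)$ has zero row and column sums (indeed $e^t\widehat W e=0$ and $\widehat W$ lies in $\im(\sW)$), the component of any $A\in L$ outside $\sW(L)$ is orthogonal to $\widehat W$ and only increases the distance, so the minimiser over $L$ automatically lies in $\sW(L)$ and coincides with $\widehat W_L$. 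Concretely, decomposing $A = \sW(A) + (A-\sW(A))$ and using the Pythagorean identity $\|\widehat W - A\|_F^2 = \|\widehat W - \sW(A)\|_F^2 + \|A-\sW(A)\|_F^2$ shows the second summand is minimised at zero, reducing the $L$-problem to the $\sW(L)$-problem.
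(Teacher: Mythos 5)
Your proof is correct, but your route to the main identity $\widehat W_L=\sW(\widehat V_L)$ is genuinely different from the paper's. The paper splits the objective by Pythagoras, $\|\widehat V-A\|_F^2=\|\widehat W-\sW(A)\|_F^2+\|\widehat V-\widehat W-(I-\sW)(A)\|_F^2$, and then uses the direct-sum decomposition $L=\sW(L)\oplus(I-\sW)(L)$ (which also requires noting $(I-\sW)(L)\subseteq L$) to break the single minimisation over $L$ into two independent minimisations, one over $\sW(L)$ and one over $(I-\sW)(L)$; the identity is then read off from the first component. You instead characterise the LS estimators by their normal equations and verify, via self-adjointness and idempotence of the orthogonal projection $\sW$, that $\sW(\widehat V_L)$ satisfies the orthogonality conditions characterising the projection of $\widehat W$ onto $\sW(L)$; uniqueness of the projection concludes. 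Your computation $\langle \widehat W-\sW(\widehat V_L),\sW(B')\rangle_F=\langle \widehat V-\widehat V_L,\sW(B')\rangle_F=0$ is valid, and your argument is leaner: it invokes the hypothesis $\sW(L)\subseteq L$ exactly once (to place $\sW(B')$ inside $L$), needs no decomposition of $L$, and works verbatim for any self-adjoint idempotent operator on an inner-product space. What the paper's decomposition buys is the simultaneous identification of the complementary component $(I-\sW)(\widehat V_L)$ as the LS estimator over $(I-\sW)(L)$, a fact the paper re-uses immediately after the theorem (for the estimator $\widehat U_L$ and the converse statement), and which your normal-equations argument would have to re-derive by symmetry. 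For the additional claim $\widehat W_L=\argmin_{A\in L}\|\widehat W-A\|_F$, your argument, resting on the identity $\|\widehat W-A\|_F^2=\|\widehat W-\sW(A)\|_F^2+\|(I-\sW)(A)\|_F^2$ (valid because $\widehat W\in\im(\sW)$), is essentially the same as the paper's.
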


\begin{proof}As $\sW$ is an orthogonal projection, we have for $A\in L$, 
\begin{align*}\label{eq:LS_V}
    \|\widehat V - A\|_F^2  & =\|\sW(\widehat V - A)\|_F^2 + \|(I-\sW)(\widehat V-A)\|_F^2 \nonumber\\
 & =\|\widehat W - \sW(A)\|_F^2 + \|\widehat V-\widehat W - (I-\sW)(A)\|_F^2,
\end{align*}    
further using that $\sW(\widehat V)=\widehat W$.
As $\sW(L)\subseteq L$ by assumption, hence also $(I-\sW)(L)\subseteq L$, and $\sW(L)\oplus(I-\sW)(L)=L$ by orthogonality of $\sW$. Hence, the  minimum can be found as $\widehat V_L=\widehat V_1+\widehat V_2$, where
\[ \widehat V_1= \argmin _{A\in \sW(L)}\|\widehat W- A\|_F,\quad \widehat V_2= \argmin _{A\in (1-\sW)(L)}\|\widehat V-\widehat W-A\|_F.\]
This implies $\widehat V_1=\widehat W_L$ by definition and $\widehat W_L=\sW(\widehat V_L)$.

For the last statement, $\min_{A \in \sW(L)}\|\widehat W -A\|_F\ge \min_{A\in L}\|\hat W - A\|_F$, as $\sW(L)\subseteq L$. Using   orthogonality of $\sW$, we have for $A\in L$, 
\begin{align*}
    \|\widehat W- A\|_F^2= \|\widehat W- \sW(A)\|_F^2 + \|\sW(A) - A\|_F ^ 2.
\end{align*}
Further, for   $A\in L$,  $\sW(\sW(A))=\sW(A)$. Hence, $\|\widehat W-\sW(\sW(A))\|_F= \|\widehat W-\sW(A)\|_F$ and $\|\sW(\sW(A))-\sW(A)\|_F=0$. 
It follows that  $\|\widehat W -\sW(A)\|_F\le \|\widehat W -A\|_F$ for $A\in L$. Therefore,  $\min_{A \in \sW(L)}\|\widehat W -A\|_F\le  \min_{A\in L}\|\widehat W - A\|_F,$ and  consequently, $\min_{A \in \sW(L)}\|\widehat W -A\|_F= \min_{A\in L}\|\widehat W - A\|_F$.
\end{proof}

Note that $\sW(L)\subseteq L$ if and only if  $(I-\sW)(L)\subseteq L$. Hence, provided $\sW(L)\subseteq L$   holds,  it follows  from \cref{thm:orthogonal} by symmetry that the  LS estimator $\widehat U_L$  under the
linear hypothesis  $H''\colon U\in (I-\sW)(L)\subseteq \mathbb{S}_m$, 
$$\widehat U_L=\text{argmin}_{A\in (I-\sW)(L)}\| \widehat W-A\|_F. $$
fulfils $\widehat U_L=(1-\sW)(\widehat V_L)$. It leads to a reverse statement to that of 
\cref{thm:orthogonal}.

\begin{theorem}
Let $B$ be the support of the random variable $\widehat V=\widehat V(X_1,\ldots, X_n)$ and assume $\text{span}(B)=\sV(\mathbb{S}_m)$. Furthermore, let $L\subseteq\sV(\SS_m)$ be a linear subspace. If $\widehat W_L=\sW(\widehat V_L)$ and $\widehat U_L=(1-\sW)(\widehat V_L)$ hold for all $\widehat V\in B$,     then  $\sW(L)\subseteq L$.
\end{theorem}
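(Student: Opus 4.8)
The plan is to recognise every least-square estimator as an orthogonal projection and then to linearise the two hypothesised identities. Since $\sV(\SS_m)$ carries the Frobenius inner product and $L\subseteq\sV(\SS_m)$ is a subspace, the minimiser $\widehat V_L=\argmin_{A\in L}\norm{\widehat V-A}_F$ is the orthogonal projection $P_L\widehat V$ of $\widehat V$ onto $L$, and likewise $\widehat W_L=P_{\sW(L)}\widehat W=P_{\sW(L)}\sW(\widehat V)$, where $P_L$ and $P_{\sW(L)}$ denote the orthogonal projections onto $L$ and $\sW(L)$. In this language the first hypothesis $\widehat W_L=\sW(\widehat V_L)$ becomes $P_{\sW(L)}\sW(\widehat V)=\sW P_L(\widehat V)$, an equality between two expressions that are \emph{linear} in $\widehat V$.

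First I would upgrade this pointwise equality to an operator identity. The map $\widehat V\mapsto P_{\sW(L)}\sW(\widehat V)-\sW P_L(\widehat V)$ is a linear operator on $\sV(\SS_m)$ that vanishes on the support $B$; since $\mathrm{span}(B)=\sV(\SS_m)$ by assumption, a linear map vanishing on a spanning set vanishes identically, so $P_{\sW(L)}\sW=\sW P_L$ on all of $\sV(\SS_m)$. The same argument applied to the relation $\widehat U_L=(1-\sW)(\widehat V_L)$ yields a complementary identity on $\sV(\SS_m)$; as will be seen, the first identity already carries the proof, the second serving as a consistent (indeed redundant) check.

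Next I would test the identity $P_{\sW(L)}\sW=\sW P_L$ against the orthogonal complement $L^{\perp}$ of $L$ taken \emph{inside} $\sV(\SS_m)$. For $x\in L^{\perp}$ we have $P_L x=0$, hence $P_{\sW(L)}\sW(x)=0$, i.e.\ $\sW(x)\perp\sW(L)$. Because $\sW$ is an orthogonal projection it is self-adjoint and idempotent for the Frobenius inner product, so for every $y\in L$, $\inpr{\sW(x),\sW(y)}=\inpr{x,\sW\sW(y)}=\inpr{x,\sW(y)}$; thus $\inpr{x,\sW(y)}=0$ for all $y\in L$, that is $x\perp\sW(L)$. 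Since this holds for every $x\in L^{\perp}$, we conclude $\sW(L)\subseteq(L^{\perp})^{\perp}=L$, which is the assertion.

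The step requiring care is to make sure every orthogonality is measured in the right ambient space. One must check that $\sW(L)\subseteq\sV(\SS_m)$ --- which follows from $\sV\circ\sW=\sW$ in \cref{thm:Vestimator}, giving $\im(\sW)\subseteq\im(\sV)=\sV(\SS_m)$ --- so that $L^{\perp}$, the identity $(L^{\perp})^{\perp}=L$, and the projection $P_{\sW(L)}$ all make sense within $\sV(\SS_m)$. The only genuinely substantive moves are this bookkeeping together with the spanning argument that converts the pointwise-on-$B$ relations into operator identities; once these are in place, the self-adjoint--idempotent identity $\sW^{t}\sW=\sW$ does the rest mechanically.
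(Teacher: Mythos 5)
Your proof is correct, and it takes a genuinely different route from the paper's. The paper argues by contradiction: assuming $\sW(L)\not\subseteq L$, it picks a witness point $\widehat V\in B$ lying in $\bigl(\sW(L)\oplus(I-\sW)(L)\bigr)\setminus L$, observes that for this point $\widehat W_L=\sW(\widehat V)$ and $\widehat U_L=(I-\sW)(\widehat V)$ exactly, and concludes that the hypothesised identities would force $\widehat V_L=\widehat V\notin L$; both hypotheses are used, and the argument is local to a single point of $B$. You instead argue directly and globally: identifying each LS estimator with an orthogonal projection, you note that $\widehat V\mapsto P_{\sW(L)}\sW(\widehat V)-\sW P_L(\widehat V)$ is linear and vanishes on $B$, so the span condition upgrades the pointwise hypothesis to the operator identity $P_{\sW(L)}\sW=\sW P_L$ on all of $\sV(\SS_m)$; self-adjointness and idempotence of $\sW$ then give $\inpr{x,\sW(y)}=\inpr{\sW(x),\sW(y)}=0$ for $x\in L^{\perp}$, $y\in L$, hence $\sW(L)\subseteq(L^{\perp})^{\perp}=L$. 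Your route buys two things. First, it shows the second hypothesis $\widehat U_L=(I-\sW)(\widehat V_L)$ is redundant: the first alone implies the conclusion. Second, it is more robust on a delicate point: when $\sW(L)\oplus(I-\sW)(L)$ is a \emph{proper} subspace of $\sV(\SS_m)$, one can construct spanning sets $B$ whose intersection with it lies entirely inside $L$, so the witness the paper selects need not exist from the span condition alone; what rescues that argument is precisely the linearity-plus-spanning step you make explicit. What the paper's contradiction argument buys in exchange is geometric transparency: the decomposition $\widehat V=\widehat V_1+\widehat V_2$ mirrors \cref{fig:counter} and shows concretely how the two estimators come apart when $\sW(L)\not\subseteq L$.
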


\begin{proof}
We proceed by contradiction. By the remark above, we might assume  that $\sW(L)\not\subseteq L$ or $(I-\sW)(L)\not\subseteq L$, and show that it leads to a contradiction. Choose an arbitrary point $\widehat V\in B$ such that  $\widehat V=\widehat V_1+\widehat V_2\in\sW(L)\oplus(I-\sW(L))\setminus L$, 
where $\widehat V_1\in\sW(L)$, $\widehat V_2\in(1-\sW)(L)$. Such a point exists due to the  span condition. 

The LS estimate $\widehat V_L$ fulfils $\widehat V_L\not=\widehat V$, as $\widehat V\not\in L$, while the LS estimates   $\widehat W_L$  and $\widehat U_L$ clearly fulfil  $\widehat W_L=\widehat V_1$ and $\widehat U_L=\widehat V_2$, respectively, as $\widehat V_1\in \sW(L)$ and $\widehat V_2\in (I-\sW)(L)$ by assumption. 
 Since $\widehat V_L\not=\widehat V$, then either $\sW(\widehat V_L)\not=\widehat V_1$ or $(I-\sW)(\widehat V_L)\not=\widehat V_2$, contradicting the conditions of the theorem.  The proof is completed.
\end{proof}

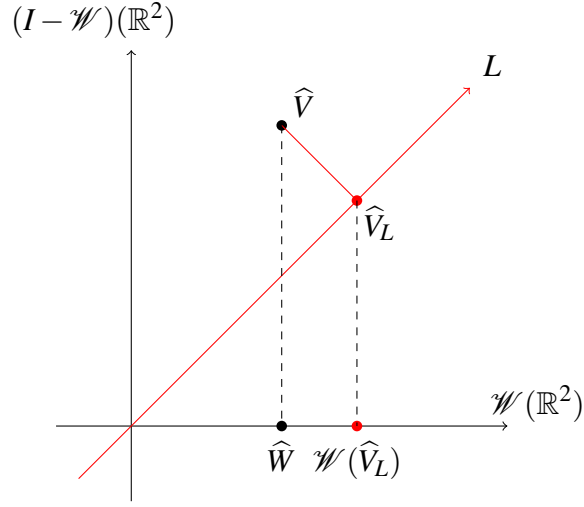
\begin{figure}[bt!] 
\begin{center}
\begin{tikzpicture}

\draw[->] (-1,0) to (5,0);
\draw[->] (0,-1) to (0,5);
\draw[->,red] (-0.7,-0.7) to (4.5,4.5);
\node  at (4.8,4.8)   {$L$};
\node  at (5.4,0.3)   {$\sW(\R^2)$};
\node  at (-0.5,5.4)   {$(I-\sW)(\R^2)$};
\fill[black]   (2,4) circle[radius=2pt];
\fill[red]   (3,3) circle[radius=2pt];
\node   at (2.3,4.3)   {$\widehat V$};
\node   at (3.3,2.7)   {$\widehat V_L$};
\draw[-,red] (3,3) to (2,4);
\fill[black] (2,0) circle[radius=2pt];
\fill[red] (3,0) circle[radius=2pt];
\node  at (2,-0.4)   {$\widehat W$};
\node  at (3,-0.45)   {$\sW(\widehat V_L)$};
\draw[-,dashed,black] (3,3) to (3,0);
\draw[-,dashed,black] (2,4) to (2,0);

\end{tikzpicture}
\end{center}

\caption{Imagine $\widehat V\in \R^2$, and that $\sW$ and $I-\sW$ are the projections onto the two coordinate axes, respectively. Furthermore, assume the hypothesis $H\colon V\in L$ corresponds to the red line. The corresponding hypothesis for $W$ is $H'\colon W\in \sW(L)=\R\times\{0\}$. The LS estimate of $ V$ under $H$ is $\widehat V_L$ (top red point), while the LS estimate of $W$ under $H'$, 
 is $\widehat W_L=\widehat W$ itself. However, this LS estimate is different from the projection of $\widehat V_L$ onto $\sW(L)$ (bottom red point).  Compared to the conditions of the theorem, $\sW(L)\not\subseteq L$. }
\label{fig:counter}
\end{figure}

\begin{example}
Two populations related by a tree as in \cref{fig:exampletreewith2and3leaves}(a) corresponds to the linear hypothesis, $H\colon V\in L$, given by
\begin{align*}
V\in L&=\left\{\frac14\begin{pmatrix}
3\sigma_1-\sigma_2 &-\sigma_1-\sigma_2 \\ -\sigma_1-\sigma_2  & 3\sigma_2-\sigma_1
\end{pmatrix}\,\Big|\, \sigma_1,\sigma_2\in\R\right\} \\
&=\left\{\frac14\begin{pmatrix} \sigma+2\delta &-\sigma \\ -\sigma   & \sigma -2\delta \end{pmatrix}\,\Big|\, \sigma,\delta\in\R\right\}, 
\end{align*}
where $\sigma=\sigma_1+\sigma_2$ and $\delta=\sigma_1-\sigma_2$, see \cref{eq:m=2}.
 The projection of $L$  by $\sW$ fulfils
\begin{align*}
\sW(L)&=\left\{\frac14\begin{pmatrix}
\sigma &-\sigma \\ -\sigma   & \sigma
\end{pmatrix}\,\Big|\, \sigma\in\R\right\}\subseteq L,
\end{align*}
Hence, \cref{thm:orthogonal}  applies.

However, assuming the branch lengths are related by $\sigma_1=2\sigma_2$,  then $\sigma=3\sigma_2$ and $\delta=\sigma_2$, and  $L$ reduces  to a one-dimensional linear subspace,
\begin{align*}
\widetilde L&=\left\{\frac14\begin{pmatrix}
5\sigma_2 &-3\sigma_2 \\ -3\sigma_2  &  \sigma_2
\end{pmatrix}\,\Big|\, \sigma_2\in\R\right\},
\end{align*}
while $\sW(\widetilde L)=\sW(L)$. Clearly, 
$$\sW(\widetilde L)\cap \widetilde L=\left\{\begin{pmatrix}
0 &0 \\ 0  &  0\end{pmatrix}\right\}\not=\sW(\widetilde L),$$
and the conclusion of \cref{thm:orthogonal} does not hold.
This case might be seen as an instance of \cref{fig:counter}.
\end{example}

\section{Combining information across SNPs}
\label{sec:cross}

By combining information across SNPs, one might derive more informative about the data  generating process and also derive other useful statistics. In this case, it is necessary to require some regularity across sites for reasons of comparison. We propose one such statistic, which is closely related to $\Sigma$ in the previous section, by 
\begin{align*}
	\widehat S = \frac1{2\floor{n/2}}\sum_{k=1}^{\floor{n/2}} (X^{2k}-X^{2k-1})(X^{2k}-X^{2k-1})^t.
\end{align*}
assuming the number of SNPs is even (if it is odd, one might discard one SNP).   Assuming the true allele frequencies are draws from a common distribution then the average allele frequency cancels out in the difference $X^{2k}-X^{2k-1}$.
Thus, we are left with an expression for the variance alone, see below.

As $\widehat S$ makes use of information from pairs of variables, it is natural to impose some  regularity conditions on the parameters $(\mu_k,\Sigma^k)$, $k=1,\ldots,n$, of the model. Perhaps the simplest approach is to embed the model into a Bayesian framework (as is often used for simulation purposes \citep{Posada2016}).  Specifically, we assume $(\mu_k,\Sigma^k)$, $i=1,\ldots,n$, are draws (at this point not necessarily independent) from a common distribution $F$, and the random vector $X^k$ subsequently is a draw from a distribution $G$, characterised by $(\mu_k,\Sigma^k)$,
\begin{equation}\label{eq:modelrandomSigmaandmu}
\begin{split}
	(\mu_k,\Sigma^k)&\sim F \\
	X^k\mid \mu_k,\Sigma^k &  \sim G(\sdot \mid \mu_k,\Sigma^k).
	\end{split}     
\end{equation}
Here, we assume $F$ is a distribution concentrated on $\R\x \PP_m$, where $\PP_m\subseteq\mathbb{S}_m$ is the space of real symmetric positive definite matrices  with mean $(\mu_0,\Sigma_0)$, and the marginal distribution of $\mu_k$ has variance $\tau$.

Then, $X^k$ has mean,
\begin{align*}
	\E (X^k) = \E(\E(X^k\mid \mu_k,\Sigma^k))=\E(\mu_ke)=\mu_0  e,
\end{align*}
and covariance 
\begin{align*}
	\cov(X^k)  & =\E[(X^k-\mu_0 e)(X^k-\mu_0 e)^t]\\
	& = \E\big(\E[(X^k-\mu_ke+\mu_ke-\mu_0 e)(X^k-\mu_ke+\mu_ke-\mu_0 e)^t\mid \mu_k,\Sigma^k]\big)\\
	& = \E(\Sigma^k) +\E((\mu_k-\mu_0)^2E)\\
	& = \Sigma_0 + \tau E,
\end{align*}
where  $\tau=\E((\mu_k-\mu_0)^2)$. Set $\Sigma_1=\Sigma_0 + \tau E$.  
Since $\Sigma^k$ is assumed to be positive definite, then so is $\Sigma_0$, and hence also $\Sigma_1$. The latter  follows directly from $x^t(\Sigma_0 + \tau E)x=x^t\Sigma_0 x+ \tau (\sum_{i=1}^mx_i)^2\ge 0$ (with equality if and only of $x=0$).

 Assuming  $X^{2k-1}$ and $ X^{2k}$ are independent, then  
\begin{align*}
&\E[(X^{2k}-X^{2k-1})(X^{2k}-X^{2k-1})^t]\\ & \quad= \E[(X^{2k}-\mu_0 e+\mu_0 e-X^{2k-1})(X^{2k}-\mu_0 e+\mu_0 e-X^{2k-1})^t]\\
	&\quad= 2 \Sigma_1,
\end{align*}
hence $E(\widehat S)=\Sigma_1$.

To connect to the model of Section \ref{sec:covar}, we might think of $\Sigma$ as  $\Sigma_1=\Sigma_0+\tau E$, and $\tau$ as the variance of the means across sites. 

\medskip
In the context of population genetics, the assumption that $X^{2k-1}$ and $X^{2k}$ are independent, is quite mild. We only ask for a pairing of the variables, $X^1,\ldots,X^n$, such that the two variables of each pair are independent, not that  pairs of variables themselves are  independent. One could, for example take one member of the pair from one chromosome and the other from another chromosome, assuming there are sufficient number of SNPs for such pairing. A precise condition is given here.

\begin{lemma}
Assume that each SNP with a corresponding random variable is associated to one of $C$ chromosomes, such that random variables  associated to SNPs on different chromosomes are  independent of each other. Let $n_i$ be the number of SNPs associated to chromosome $i$, $i=1,\ldots,C$. Furthermore, assume the chromosomes are ordered such that $n_1\ge n_2\ge\ldots\ge n_C$. If $n_1+\ldots+n_C$ is an even number and $\sum_{i=2}^Cn_i\ge n_1$, then the SNPs can be ordered in pairs, such that the corresponding random variables  of each pair are independent.
\end{lemma}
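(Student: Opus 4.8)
The plan is to reduce the hypothesis to the single inequality $n_1\le n/2$, where $n=\sum_{i=1}^C n_i$ is the total number of SNPs, and then to exhibit an explicit pairing. First I would observe that the stated condition $\sum_{i=2}^C n_i\ge n_1$ is exactly $n-n_1\ge n_1$, i.e.\ $n_1\le n/2$; since $n_1$ is the largest of the $n_i$ by the assumed ordering, this gives $n_i\le n/2$ for \emph{every} chromosome $i$. This is the only way the two hypotheses will be used.

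Next I would fix a linear ordering of the SNPs in which all SNPs belonging to the same chromosome occupy a contiguous block: list the SNPs of chromosome $1$ first, then those of chromosome $2$, and so on, labelling them $1,\ldots,n$. Chromosome $i$ then occupies a block $\{a+1,\ldots,a+n_i\}$ of $n_i$ consecutive labels, where $a=n_1+\cdots+n_{i-1}$. The pairing I propose is the ``offset by a half'' pairing, matching label $j$ with label $j+n/2$ for $j=1,\ldots,n/2$. Because $j\mapsto j+n/2$ is a bijection from $\{1,\ldots,n/2\}$ onto $\{n/2+1,\ldots,n\}$ (this is where the parity of $n$ is used, so that $n/2$ is an integer), this is a perfect matching of all $n$ SNPs into $n/2$ pairs.

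The key step is to check that no pair consists of two SNPs from the same chromosome. Suppose, for a contradiction, that labels $j$ and $j+n/2$ both lie in the block $\{a+1,\ldots,a+n_i\}$ of chromosome $i$. Then $a+1\le j$ and $j+n/2\le a+n_i$, and subtracting gives $n/2\le (a+n_i)-(a+1)=n_i-1<n_i$, contradicting $n_i\le n/2$. Hence the two members of every pair come from different chromosomes, and by hypothesis their associated random variables are independent, which is exactly the required conclusion.

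I do not expect a genuine obstacle here: the entire argument hinges on \emph{choosing} the offset-by-half pairing, after which the verification is just the elementary range estimate above. The only points needing care are the translation of the hypothesis into $n_i\le n/2$ and the use of the evenness of $n$ to make the offset pairing well defined. For completeness I might note that an alternative route is by induction, repeatedly pairing one SNP from the currently largest chromosome with one from the second largest; there the content is to verify that $\max_i n_i\le(\text{remaining total})/2$ is preserved at each step, but the explicit pairing avoids this bookkeeping entirely.
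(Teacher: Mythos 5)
Your proof is correct, but it takes a genuinely different route from the paper. The paper does not actually prove the lemma itself: it defers to the cited reference (Hakimi's realizability theorem for degree sequences of loopless multigraphs, with chromosomes as nodes and pairs as edges) and sketches a greedy construction in which all $n_C$ SNPs of the smallest chromosome are matched into the largest chromosome, the counts are re-sorted, and the step is repeated. Your argument is instead a self-contained explicit construction: reduce the two hypotheses to $n_i\le n/2$ for every $i$, lay the chromosomes out as contiguous blocks, and pair position $j$ with position $j+n/2$; a same-block pair would force $n/2\le n_i-1<n_i$, contradicting $n_i\le n/2$. This buys two things. First, it is elementary and complete, replacing an external citation and a recursion whose invariant (evenness plus $\max_i n_i\le$ half the remaining total) must be checked at every step by a two-line inequality. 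Second, it is robust where the paper's sketch is fragile: taken literally, matching the \emph{whole} smallest chromosome into the largest can destroy the invariant --- for $n_1=n_2=n_3=2$ it leaves the two SNPs of chromosome $2$ to be paired with each other, which is impossible --- so the sketch must really be read as pairing one SNP at a time from the two currently largest chromosomes, which is exactly the inductive alternative you mention at the end of your proposal. Your offset-by-half pairing sidesteps that bookkeeping entirely.
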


A proof can be found in \cite{Hakemi}. A multi-graph (a graph potentially with multiple edges between two nodes) is constructed with $C$ nodes, representing chromosomes. Each edge between two nodes represents a pair of variables. Then there is a simple automated method for ordering the pairs: the $n_C$ variables on chromosome $C$ are linked to $n_C$ variables on chromosome $1$. Then, there are $n_2,\ldots, n_{C-1}$ and $n_1'=n_1-n_C$ variables left on $C-1$ chromosomes. These are reordered from large to small  and the pairing reiterated  \cite{Hakemi}.

The proof of the next statement can be found in \cref{proof:convergenceinL2ofwidetildeSigma}. 

\begin{theorem}\label{thm:convergenceinL2ofwidetildeSigma}
    Assume $X^1,\ldots,X^n$ are random vectors in $\R^{m}$  defined by \cref{eq:modelrandomSigmaandmu}, and that there exists  an integer $t\ge 1$, such  that the pairs  $(X^{2k-1},X^{2k})$ and $(X^{2\ell-1},X^{2\ell})$ are independent whenever $|k-\ell|\ge t$, and that $X^{2k-1}$ and $X^{2k}$ are  independent for $k=1,\ldots,\floor{n/2}$. Moreover, if there exists a constant $C>0$, such that the  forth moments of $X_i^k$, $k=1,\ldots,n$, $i=1,\ldots,m$,  are smaller than $C$, then 
\begin{align*}
\E[\|\widehat S-\Sigma\|_2] &\le 4\sqrt{\frac{m^2t}{\floor{n/2}} C},\end{align*}
for all $n,m$. 
\end{theorem}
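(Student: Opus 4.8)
The plan is to follow the standard recipe for controlling the fluctuation of an average of weakly dependent matrices, exactly as in the (deferred) proof of \cref{thm:convergenceofhatV}. Write $N=\floor{n/2}$ and set $S^k=\tfrac12(X^{2k}-X^{2k-1})(X^{2k}-X^{2k-1})^t$, so that $\widehat S=\tfrac1N\sum_{k=1}^N S^k$ and, by the computation preceding the theorem (which used independence within each pair), $\E(S^k)=\Sigma_1=\Sigma$. I read the norm $\|\cdot\|_2$ in the statement as the Frobenius norm $\|\cdot\|_F$, which is what produces the factor $m^2$ under the root.

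First I would pass from the first moment to the second moment by Jensen's inequality, $\E[\|\widehat S-\Sigma\|_F]\le\sqrt{\E[\|\widehat S-\Sigma\|_F^2]}$, and then, since $\E(\widehat S)=\Sigma$, expand the squared Frobenius norm entrywise as $\E[\|\widehat S-\Sigma\|_F^2]=\sum_{i,j=1}^m\var(\widehat S_{ij})$. Each entry is an average, $\widehat S_{ij}=\tfrac1N\sum_k S^k_{ij}$ with $S^k_{ij}=\tfrac12(X_i^{2k}-X_i^{2k-1})(X_j^{2k}-X_j^{2k-1})$, so $\var(\widehat S_{ij})=\tfrac1{N^2}\sum_{k,\ell}\cov(S^k_{ij},S^\ell_{ij})$.

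Next I would exploit the dependence structure. Since $S^k$ is a function of the pair $(X^{2k-1},X^{2k})$, the assumed independence of pairs at lag $\ge t$ gives $\cov(S^k_{ij},S^\ell_{ij})=0$ whenever $|k-\ell|\ge t$; for each $k$ there are at most $2t-1$ indices $\ell$ with $|k-\ell|<t$, so at most $N(2t-1)$ covariance terms survive. For each surviving term I bound the covariance by Cauchy--Schwarz, $|\cov(S^k_{ij},S^\ell_{ij})|\le\max_k\var(S^k_{ij})$. The one genuinely computational step — and the main obstacle — is to bound $\var(S^k_{ij})\le\E[(S^k_{ij})^2]$ by a constant multiple of $C$ from the fourth-moment control on individual coordinates alone. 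Here I would first use convexity of $x\mapsto x^4$ in the form $(a-b)^4\le8(a^4+b^4)$ to get $\E[(X_i^{2k}-X_i^{2k-1})^4]\le16C$, and then a single Cauchy--Schwarz across the two coordinates $i,j$ to reach $\E[(S^k_{ij})^2]\le\tfrac14\cdot16C=4C$, uniformly in $k,i,j$.

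Combining the three ingredients gives $\var(\widehat S_{ij})\le\tfrac{N(2t-1)}{N^2}\cdot4C\le\tfrac{8tC}{N}$, and summing over the $m^2$ entries yields $\E[\|\widehat S-\Sigma\|_F^2]\le\tfrac{8m^2tC}{N}$. Taking the square root and bounding $\sqrt8\le4$ delivers $\E[\|\widehat S-\Sigma\|_F]\le4\sqrt{m^2tC/N}$, as claimed. It is worth noting the distinct roles of the two independence hypotheses: the within-pair independence $X^{2k-1}\perp X^{2k}$ is used only to make $\E(\widehat S)=\Sigma$, so that the second-moment reduction is a genuine variance, whereas the lag-$t$ independence between pairs drives the decorrelation and hence the $t/N$ convergence rate.
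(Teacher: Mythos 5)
Your proof is correct and reaches the stated bound with room to spare ($2\sqrt2$ in place of $4$). The overall skeleton --- Jensen's inequality, entrywise expansion of the squared Frobenius norm, a uniform fourth-moment bound on each summand, and exploitation of the lag-$t$ independence --- matches the paper's, but you implement the two technical steps by different means. For the dependence, the paper splits $\sum_k \hat S^k$ into $t$ subsequences of mutually independent terms, bounds each subsequence variance using independence, and recombines via its auxiliary result \cref{lem:boundssecondmomentgeneralm}, namely $\var\bigl(\sum_{k=1}^t Y_k\bigr)\le t\sum_{k=1}^t\var(Y_k)$; you instead expand $\var(\widehat S_{ij})$ as a double sum of covariances, annihilate all terms with $|k-\ell|\ge t$, and bound the at most $N(2t-1)$ survivors by Cauchy--Schwarz. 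For the per-term moment bound, the paper expands the square of the four-term product into sixteen terms and applies Jensen plus H\"older twice to each, while you use convexity, $(a-b)^4\le 8(a^4+b^4)$, followed by a single Cauchy--Schwarz; both yield the same numerical constant ($16C$ for the unnormalized rank-one matrix). Your covariance-counting route is more self-contained (no auxiliary lemma needed), whereas the paper's blocking argument is a verbatim reuse of its proof of \cref{thm:convergenceofhatV}, which buys uniformity of technique across the two theorems. A minor point in your favour: you carry the factor $\tfrac12$ from the definition of $\widehat S$ consistently, whereas the paper's proof silently drops it (it writes $\hat S=\frac1{\floor{n/2}}\sum_k\hat S^k$ although the definition has the prefactor $\frac1{2\floor{n/2}}$); the paper's constants are generous enough that its stated bound still holds, but your accounting is the cleaner one.
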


This estimator has as the additional benefit that it  accurately estimates
the variance of $X^k$, while $\widehat V$ only estimates it up to a constant.

\subsection{Sampling bias}

In the previous section, we did not make any specific assumptions about the random vectors $X^1,\ldots,X^n$, though it would  be natural to think of them as  population allele frequencies. 
However, typically, we do not have access to population allele frequencies, but only sample allele frequencies.

 To make this specific, let  $X^1,\ldots,X^n$ denote population allele frequencies and 
 $X^{s,k}=(X_{1}^{s,k},\ldots,  X_{m}^{s,k})$, $k=1,\ldots,n$, be the corresponding sample allele frequencies. We will assume the sample  allele counts  are binomial, that is, $ X_{i}^{s,k}=Z_{i}^k/(2N_{ik})$, where $Z_{i}^k
\sim \text{Bi}(2N_{ik},X_{i}^k)$, and $N_{ik}$ denotes the  sample size  at site $k$ in population $i$. By  allowing $N_{ik}$ to vary over $k$, we allow for missing data   across loci. 

Define 
\[Y= \frac1n \sum_{k=1}^n X^k(X^k)^t,\quad   Y^s = \frac1n \sum_{k=1}^n  X^{s,k} (X^{s,k})^t.\]
Then, the  three statistics  $\widehat W$, $\widehat D$ and $\widehat V$ are linear maps of $Y$, namely,  $\widehat  W=\sW(Y)$, $\widehat  D=\sD(Y)$ and $\widehat  V=\sV(Y)$ (the proof is left to the reader).
 Conditioned on $X^k$, the variable $X^{s,k}-X^k$ has zero mean, such that
 \begin{align*}
     \E[ X^{s,k} (X^{s,k})^t] = \E [( X^{s,k}-X^k)( X^{s,k}-X^k)^t] + \E[X^k (X^k)^t],
 \end{align*}
by adding and subtracting $X^k$.

Also conditioned on $X^k$, the sample variables $ X^{s,k} _a$ and $ X^{s,k}_b$ are independent for $a\neq b$. Hence, $\E[( X^{s,k}_a - X^k_a)( X^{s,k} _b - X^k_b)]=0$ for $a\neq b$. Furthermore, 
$$\E[(\hat X^k_a - X^k_a)^2]= \frac{ X^k_a(1-X^k_a)}{2N_{i,k}}.$$
 Thus, the bias correction of  $ Y_n^s$ is the diagonal matrix  
\[
\text{bias}( Y^s) = \frac1n\sum_{k=1}^n\diag\rh{\frac{X_1^{s,k}(1-X_1^{s,k})}{8N_{1k}^2(N_{1k}-1)},\ldots,\frac{X_m^{s,k}(1-X_m^{s,k})}{8N_{mk}^2(N_{mk}-1)}}
\] 
 \cite[text S1, supplementary material]{PickrellPritchard2012}. 

By the linearity of the mean (and hence the bias) the bias of $\widehat D, \widehat W$ and $\widehat V$ are  $\sD(\text{bias}(Y^s)),$ $\sW(\text{bias}( Y^s))$, and $\sV(\text{bias}( Y^s)),$ respectively. 

\medskip
Similarly, the bias correction for $\widehat S$ is
\begin{align*}
\bias(\hat S ) = \frac1{2\floor{n/2}}\sum_{k=1}^{\floor{n/2}}\diag\Bigg(\frac{X_1^{s,2k-1}(1-X_1^{s,2k-1})}{8N_{1,2k-1}^2(N_{1,2k-1}-1)}+\frac{X_1^{s,2k}(1-X_1^{s,2k})}{8N_{1,2k}^2(N_{1,2k}-1)},\ldots,\quad \quad &\\
\frac{X_m^{s,2k-1}(1-X_m^{s,2k-1})}{8N_{m,2k-1}^2(N_{m,2k-1}-1)}+\frac{X_m^{s,2k-1}(1-X_m^{s,2k-1})}{8N_{m,2k-1}^2(N_{m,2k-1}-1)}\Bigg).&
\end{align*}

 \section{Simulation results}

Here we present simulation results and analyses of real data that show one may 
identify the position of the root in a genealogical tree from both $\widehat V$ and $\widehat S$ directly. This is in contrast to TreeMix that relies on an outgroup  to place the root onto the tree.

For each of the scenarios below, we compute $\widehat W$, $\widehat V$, and $\widehat S$, as well as run TreeMix  by  specifying an outgroup. To estimate   the placement of the root  from  $\widehat V$ and $\widehat S$, respectively, we simply search for the partition of the $m$ populations into two groups that minimizes the average covariance between populations in different groups. The rationale for this  is that the   covariance $\text{Cov}(X^k_i,X^k_j)$, the $(i,j)$th entry of $\Sigma^k$, is smallest among the covariances when population $i$ and $j$ descend from opposite branches emanating from the root. The same holds for the $(i,j)$th entry of $V$ and $\Sigma_1$ (the expectation of $\widehat S$).

\subsection{Two simulation scenarios}

We adopt a test scenario used in  \cite{PickrellPritchard2012} and originally proposed in \cite{DeGiorgio2009} to study human evolution. We consider 20 populations related by a tree as shown in \cref{fig:tree}. At each split in the tree, the ‘outbranching’ ancestral population goes 
through a bottleneck, but population sizes are otherwise constant. We simulated two scenarios using 
the same commands as in  \citet[p4 of the supplementary information]{PickrellPritchard2012}, a short branch and a long branch scenario. Specifically, we assume
\begin{itemize}
\item 200 Mb long genome distributed into 400 independent regions, each 500 Kb long,
\item 20 individuals sampled from each of the 20 populations,
\item  Time and parameters are scaled by the effective population size, see \cite{Hudson1983,Hudson2002} for details, using an effective population size of $N_e = 10000$, and a per base per generation mutation/recombination rate of $10^{-8}$. This yields a population scaled mutation rate of $\theta=200$, and population scaled recombination rate of $\rho=200$ for each region,
\item Splits happen at equidistant times, the $i$th population splits out from the $(i-1)$th population at time $T(21-i)$, $i=2,\ldots,20$, in the past. In the short branch  scenario $T=0.00275$; in the  long branch  scenario $T=0.1375$ (50 times longer than in the short branch scenario),
\item Immediately after the $i$ population has split from the $(i-1)$th population, its population size is reduced to $2.5\%$ of its original size. The bottleneck lasts for $B$ time units before regaining its original size. In the short branch  scenario $B=0.00005$; in the  long branch  scenario $B=0.0025$ (50 times longer than in the short branch scenario).
\end{itemize}

\begin{figure}[bt!] 
\begin{center}
\begin{tikzpicture}
\draw[-,red,line width=1mm] (0,5) to (1,5);
\draw[-,red,line width=1mm] (1,4) to (2,4);
\draw[-,red,line width=1mm] (4,2) to (5,2);
\draw[-,red,line width=1mm] (5,1) to (6,1);
\draw[-,red,line width=1mm] (1,4.5) to (1,5);
\draw[-,red,line width=1mm] (2,3.5) to (2,4);
\draw[-,red,line width=1mm] (5,1.5) to (5,2);
\draw[-,red,line width=1mm] (6,0.5) to (6,1);
\draw[-,gray,line width=4mm] (0,0) to (0,5.3);
\draw[-,gray,line width=4mm] (1,0) to (1,4.5);
\draw[-,gray,line width=4mm] (2,0) to (2,3.5);
\draw[-,gray,line width=4mm] (5,0) to (5,1.5);
\draw[-,gray,line width=4mm] (6,0) to (6,0.5);

\node at (3,2) {\ldots};
\node at (3.5,-0.5) {\ldots};
\node  at (0,-0.5)   {$1$};
\node  at (1,-0.5)   {$2$};
\node  at (2,-0.5)   {$3$};
\node  at (5,-0.5)   {$19$};
\node  at (6,-0.5)   {$20$};
\end{tikzpicture}
\end{center}

\caption{Schematic drawing of the simulation set-up. Sequential splitting at equidistant times. After each splits the one of the populations undergoes a severe bottleneck. The placement of the root can be identified from the two groups of populations descending from the two branching emanating from the root; here population 1 and populations 2-20.}
\label{fig:tree}
\end{figure}
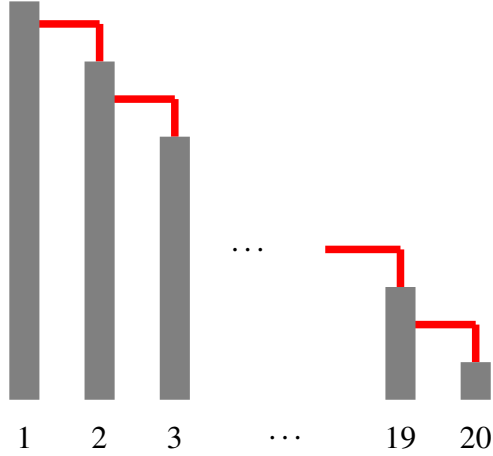

The simulation results in 1,225,747 SNPs  in the short branch scenario, and 6,530,862 SNPs in the long branch scenario. Since we simulate  a large number of SNPs, we do not bias correct. 

We compute the covariance $\Sigma$  assuming the normal approximation and a fixed root frequency $x^k_0$ for SNP $k=1,\ldots,n$, see \cref{eq:normal}. Then, the entries become
\begin{align}\label{eq:theor}
\Sigma_{ij}^k&=(i-1)\left(T-B+\frac{B}{0.025}\right)x^k_0(1-x^k_0),\quad\text{for}\quad 1\le i<j\le m  \\
\Sigma_{ii}^k&=\left[(i-1)\left(T-B+\frac{B}{0.025}\right)+(21-(i+1))T\right]\!x^k_0(1-x^k_0),\quad\text{for}\quad i=1,\ldots,m. \nonumber 
\end{align}
The variance $\Sigma_{ii}^k$ increases with increasing $i$. The covariance $\Sigma_{ij}^k$ is independent of $j>i$, and increases with increasing $i$. The difference between $\Sigma$ and $V$ is a constant matrix, hence the same conclusions hold for $V$.

Using population $1$ as an outgroup, Treemix  constructs the tree topology  exactly as modeled. However,  if there is not an outgroup specified or a wrong outgroup is used, then Treemix cannot return the correct tree topology. With our statistics $\widehat V$  and $\widehat S$, we  correctly   identify the split into one group consisting of 
population 1 and another group consisting of the remaining populations, both in the short as well as the long branch scenario, see \cref{fig:ori.cov1} and \cref{fig:ori.cov2}.

 \begin{figure} 
\centering{\includegraphics[width=\textwidth]{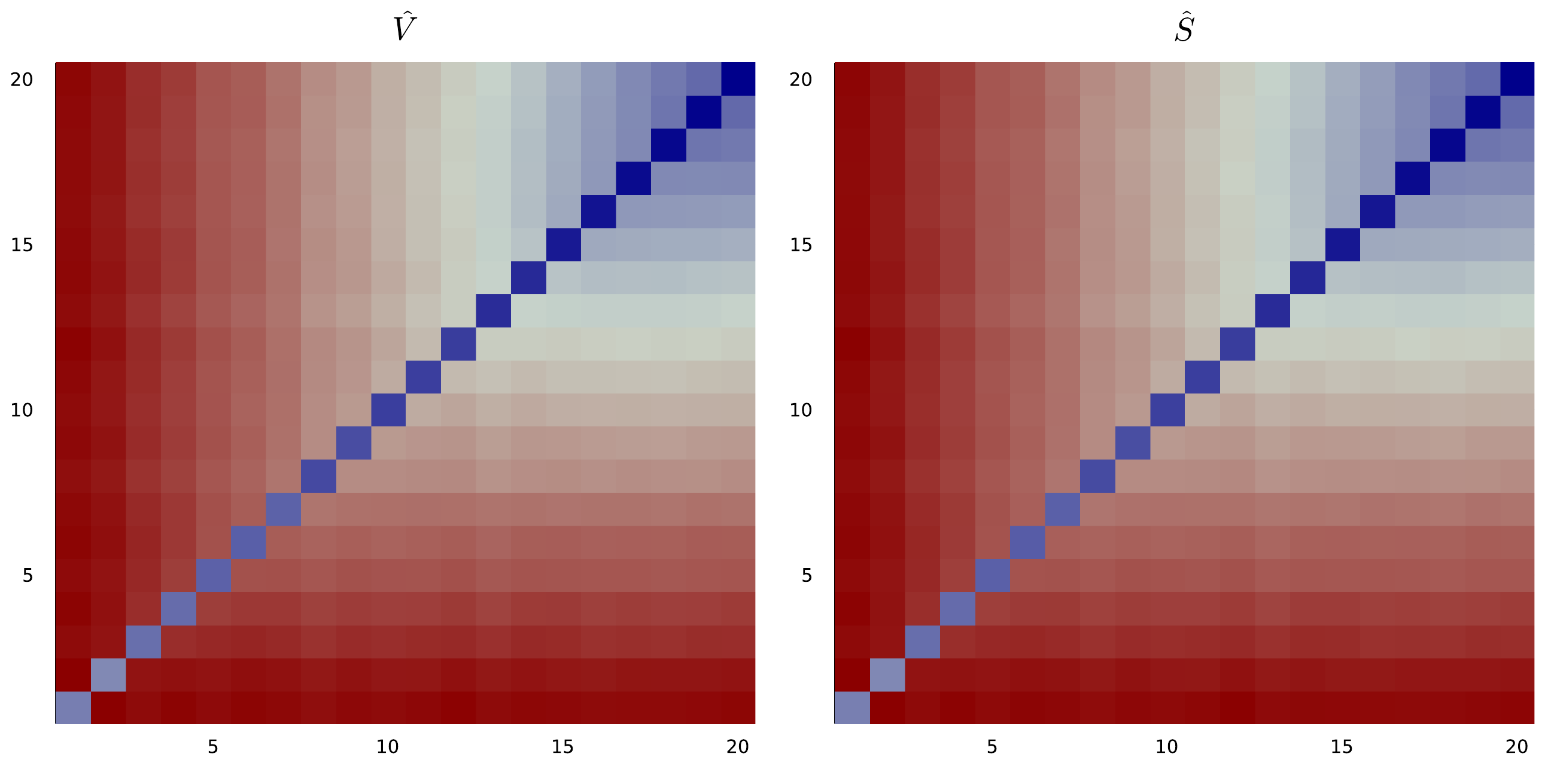}} \caption{Short branch scenario.  Color-coding runs from dark red (small values) to dark blue (large values). The diagonal elements $\widehat V_{ii}$ ($\widehat S_{ii}$) increases from population 1 to 20 due to an increasing number of bottlenecks (zero for population 1; 19 for population 20). Also, the off-diagonal elements $\widehat V_{ij}$ ($\widehat S_{ij}$) are roughly constant for  $j>i$. Both observations are in accordance with theoretical expectations, \cref{eq:theor}. } \label{fig:ori.cov1}
 \end{figure}

 \begin{figure} 
\centering{\includegraphics[width=\textwidth]{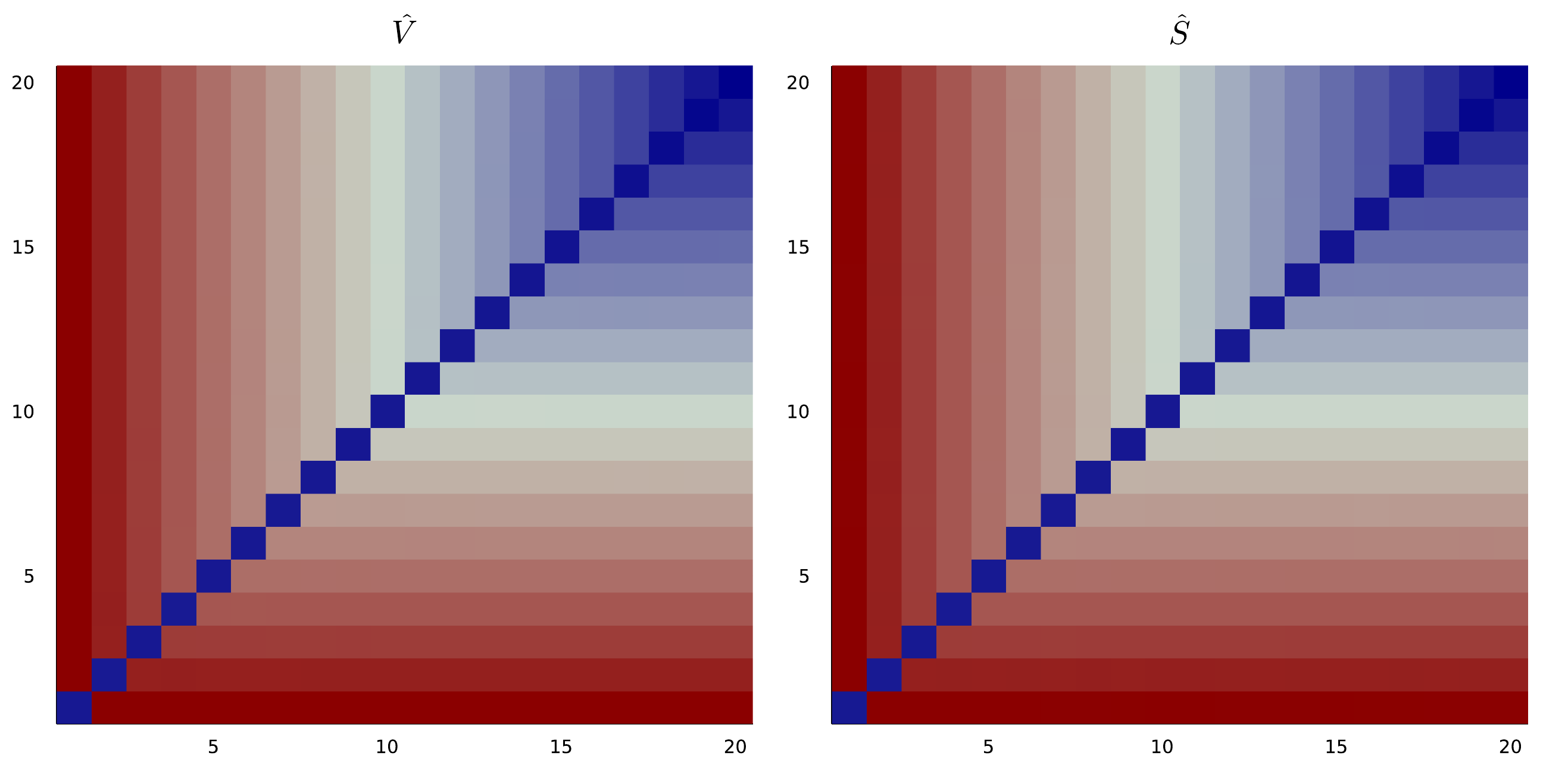}} \caption{Long branch scenario.   Color-coding runs from dark red (small values) to dark blue (large values). } \label{fig:ori.cov2}
 \end{figure}

\subsection{Data from the 1000 Genomes Project}

We selected data from six populations from the 1000 Genomes Project (see https:// www.internationalgenome.org/data-portal/data-collection/30x-grch38) that   are supposedly not admixed: YRI (Yoruba in Ibadan, Nigeria; 108 individuals), LWK (Luhya  in Webuye, Kenya; 99 individuals), CEU (Northern and Western European; 99 individuals), FIN (Finnish; 99 individuals), CHB (Han Chinese; 103 individuals), CDX (Dai Chinese; 93 individuals).
The number of SNPs is 4,391,887; all SNPs with MAF $>5\%$. Since the data set contains a large number of SNPs, we do not bias correct. 

Using YRI as an outgroup, TreeMix produces the tree in \cref{fig:1000Genomes1}. In contrast, using either $\widehat V$ or $\widehat S$, we identify the root to separate the clades (YRI, LWK) and (CEU, FIN, CHB, CDX), see \cref{fig:1000Genomes2}. Placing the root between the two clades would produce a more balanced, molecular clock-like tree.

 \begin{figure}[htp]
\centering{\includegraphics[width=\textwidth]{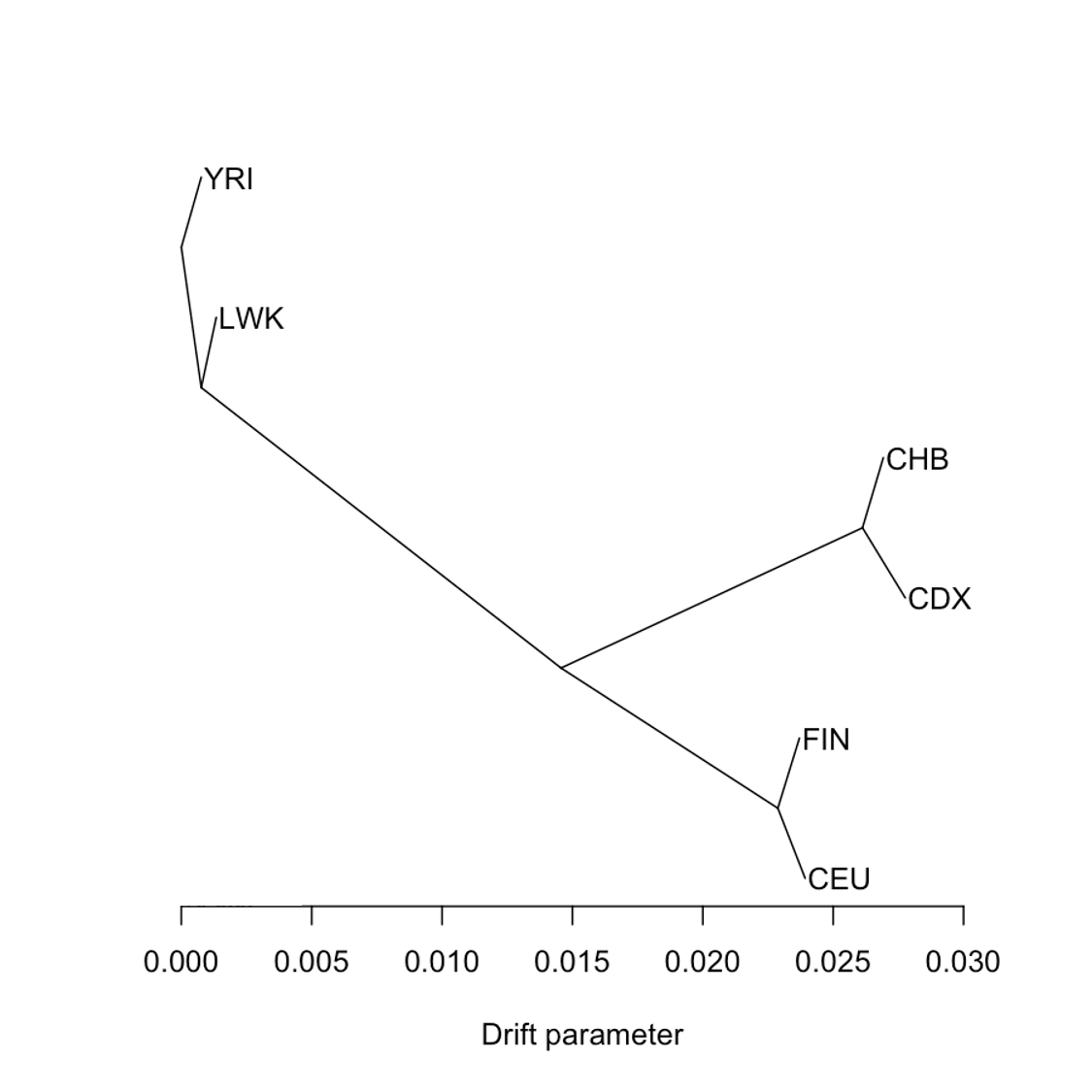}} \caption{TreeMix tree of the six 1000 Genomes Project populations, forcing YRI to be the outgroup.} \label{fig:1000Genomes1}
 \end{figure}

 \begin{figure} 
\centering{\includegraphics[width=\textwidth]{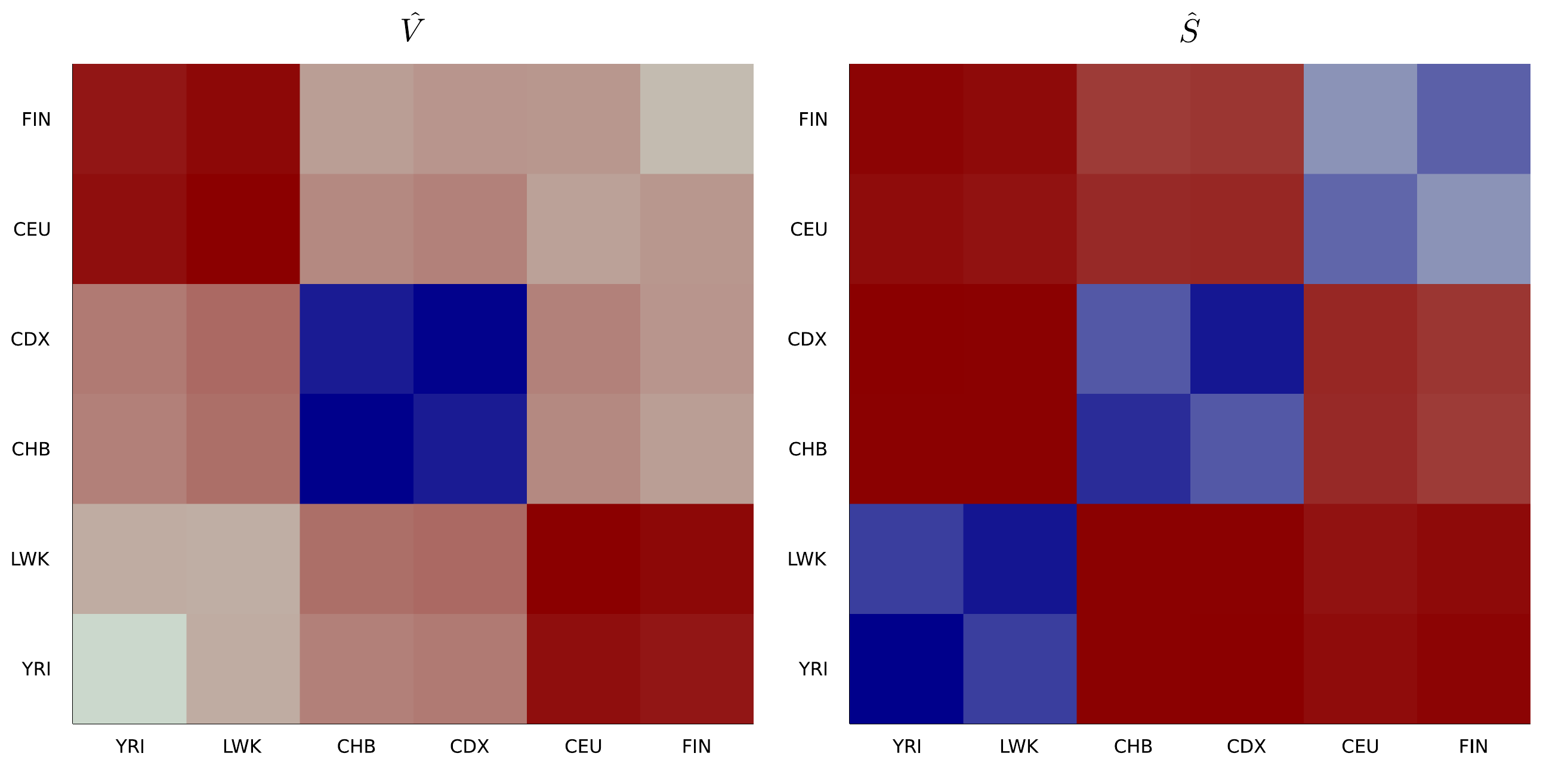}} \caption{Color-coding runs from dark red (small values) to dark blue (large values). Population 1: YRI, 2: CHB, 3: CDX, 4: LWK, 5: CEU, 6: FIN. For both statistics, the entries for pairs of populations in the two different clades, (YRI, LWK) and (CEU, FIN, CHB, CDX), are smaller than any other entry. } \label{fig:1000Genomes2}
 \end{figure}

\section{Proofs}

\subsection{Proof of \cref{lem:bijectionbetweenDandW}}\label{app:proofofbijectionbetweenDandW}
	Let $A$ be a symmetric $m\x m$ matrix. Let $W=\sW(A)$. So \begin{align*}
		\sD(W)_{ij}  & =W_{ii} + W_{jj} - 2W_{ij}\\
		 & = A_{ii} -\frac2m \sum_{k=1}^m A_{ik}  + \frac1{m^2}\sum_{k=1}^m\sum_{\ell=1}^m A_{k\ell}\\
		&\quad + A_{jj} -\frac2m \sum_{k=1}^m A_{jk} + \frac1{m^2}\sum_{k=1}^m\sum_{\ell=1}^m A_{k\ell}\\ 
		& \quad- 2\rh{A_{ij} -\frac1m \sum_{k=1}^m A_{ik} - \frac1m \sum_{k=1}^m A_{jk} + \frac1{m^2}\sum_{k=1}^m\sum_{\ell=1}^m A_{k\ell}}\\
		 &= A_{ii} + A_{jj} - 2A_{ij}\\
		 &= \sD(A)_{ij}. 
	\end{align*}
	It follows that $\sD= \sD\circ \sW$. 
	
	Let $D=\sD(A)$. Then \begin{align*}
		-\frac12\sW(D)_{ij}  & =-\frac12D_{ij} +\frac1{2m} \sum_{k=1}^m D_{ik} + \frac1{2m} \sum_{k=1}^m D_{jk} - \frac1{2m^2}\sum_{k=1}^m\sum_{\ell=1}^m D_{k\ell}\\
		 &=-\frac12\rh{A_{ii}+A_{jj}-2A_{ij}} +\frac1{2m} \sum_{k=1}^m \rh{A_{ii}+A_{kk}-2A_{ik}}\\
		& \quad+ \frac1{2m} \sum_{k=1}^m \rh{A_{jj}+A_{kk}-2A_{jk}} - \frac1{2m^2}\sum_{k=1}^m\sum_{\ell=1}^m \rh{A_{kk}+A_{\ell \ell}-2A_{k\ell}}\\
		& =A_{ij} +\frac1{2m} \sum_{k=1}^m \rh{A_{kk}-2A_{ik}}\\
		& \quad+ \frac1{2m} \sum_{k=1}^m \rh{A_{kk}-2A_{jk}} - \frac1{2m^2}\sum_{k=1}^m\sum_{\ell=1}^m \rh{A_{kk}+A_{\ell \ell}-2A_{k\ell}}\\
		& =A_{ij} - \frac1{m} \sum_{k=1}^m A_{ik} - \frac1{m} \sum_{k=1}^m A_{jk} + \frac1{m^2}\sum_{k=1}^m\sum_{\ell=1}^m A_{k\ell}\\
		 & =\sW(A)_{ij}. 
	\end{align*}
	It follows that $-\frac12\sW\circ \sD=\sW$.
	
	Let $D=\sD(A)$. Note that $D_{ii}=A_{ii}+A_{ii}-2A_{ii}=0$. It follows that $\sD(D)_{ij}=D_{ii}+D_{jj}-2D_{ij}=-2D_{ij}$, so $\sD\circ \sD=-2\sD$. 
	
	Note that 
	\begin{align*}
		\sW\circ \sW = & (-\frac12\sW\circ \sD) \circ \sW = -\frac12\sW\circ( \sD\circ \sW)
		=  -\frac12\sW\circ \sD
		=  \sW.
	\end{align*}

Let $B\in\im(\sD)$. Then, there is an $A\in\mathbb{S}_m$ such that $B=\sD(A)$. Hence, \begin{align*}
	 -\frac12\sD(\sW(B)) =  (\sD\circ -\frac12\sW\circ \sD)(A)= (\sD\circ \sW) (A)=\sD(A)=B. 
\end{align*}
\emph{Vice versa}, let $B\in \im(\sW)$. Then, there is an $A\in\mathbb{S}_m$ such that $B=\sW(A)$. Hence,
  \begin{align*}
	\sW(-\frac12\sD(B))= (-\frac12\sW\circ \sD \circ \sW)(A)=(\sW\circ \sW)(A)=\sW(A)=B.
\end{align*}
It follows that 
\[
\sW:\im(\sD)\to \im(\sW)
\]
is invertible with inverse \[
-\frac12\sD:\im(\sW)\to \im(\sD). 
\]

It follows from $\sD=\sD\circ \sW$ that $\ker(\sW)\subseteq \ker(\sD)$ and it follows from $\sW=-\frac12\sW\circ \sD$, that $\ker(\sD)\subseteq \ker(\sW)$. Hence $\ker(\sD)=\ker(\sW)$.

To calculate the kernel of $\sW$ and $\sD$ we make use of $\sW$.  
Using  \cref{eq:WA}, $\sW(E)=(I-E/m)E(I-E/m)=(I-E/m)(E - E)=0$, so $E\in\ker(\sW)$. Note that $E=e(e/2)^t+(e/2)e^t$. 
	
	Let $v\in\R^m$ satisfy $\sum_{i=1}^m v_i=0$. Then, 
	\begin{align*}
		(I-E/m)(ev^t+ve^t)(I-E/m)
		  &=(ev^t+ve^t-ev^t-0)(I-E/m)\\
		&=  ve^t(I-E/m) 		=  ve^t - ve^t =0. 
	\end{align*}
	It follows that the kernel of  $\sW$ contains $\set{ev^t+ve^t\colon v\in \R^m}$. 
	
Now suppose $A$ is an arbitrary matrix in the kernel of $\sW$. Then we might write $A=\lambda E+ F$, where $\lambda\in\R$ and $F$ is orthogonal to $E$ in the Frobenius inner product, from which follows that $\sum_{i=1}^m\sum_{j=1}^m F_{ij}=0$,  equivalent to $EFE=0$. Moreover, $0=\sW(\lambda E+ F)=\sW(F)$. It follows that 
\begin{align*}
0&=(I-E/m)F(I-E/m)\\
&=(F-EF/m)(I-E/m)\\
&=F-EF/m-FE/m +EFE/(m^2)\\
&= F-EF/m-FE/m.
\end{align*}
That is, $F=EF/m+FE/m$.  Note that  
\begin{align*}
	(EF)_{ij}= & \sum_{k=1}^m F_{kj}
	\end{align*}
does not depend on $i$. 
So there is a vector $x\in\R^m$, so that
 \[
EF =\begin{pmatrix}
	x^t \\ \vdots \\ x^t 
\end{pmatrix}. 
\]  
And we have $FE=(EF)^t=(x\ldots x)$.
 \[
F=\frac1m \begin{pmatrix}
	x^t \\ \vdots \\ x^t
\end{pmatrix} + \frac1m(x\ldots x)=ex^t/m+xe^t/m. 
\]
 It follows that $\ker(\sW)=\set{ev^t+ve^t\colon v\in \R^m}$. The kernel has dimension $m$. Since $\dim(\mathbb{S}_m)=m(m+1)/2$, it follows by the rank-nullity theorem that $\dim(\im(\sW))=m(m-1)/2$. 

It follows from $(-\frac12\sD)\circ(-\frac12 \sD)=-\frac12\sD$ and $\sW\circ \sW=\sW$ that $-\frac12\sD$ and $\sW$ are projections. 

Next we demonstrate that $\sW$ is an orthogonal projection by showing that the image space of $\sW$ is orthogonal to the kernel of $\sW$. Let $B$ be a symmetric $m\x m$-matrix. Then $B$ is orthogonal to the kernel if and only if for all $v\in\R^m$, 
\begin{equation*}
\begin{split}
0& =\inpr{B,ev^t+ve^t}= \sum_{i=1}^m \sum_{j=1}^m B_{ij}(v_i+v_j)\\
&=  \sum_{i=1}^m v_i\sum_{j=1}^m B_{ij} + \sum_{j=1}^m v_j\sum_{i=1}^m B_{ij}= 2 \sum_{i=1}^m v_i \sum_{j=1}^m B_{ij}.  
\end{split}
\end{equation*}
Note that $\set{e_ie^t+ee_i^t:1\le i \le m}$ is a basis for the kernel (where $e_i$ is the $i$th unit  vector), and 
\[
\inpr{B,e_ie^t+ee_i^t} =2\sum_{j=1}^mB_{ij},\quad i=1,\ldots,m.
\]
Thus $B$ is orthogonal to $\ker(\sW)$ if and only if all rows of $B$ sum to zero.

Denote $W=\sW(A)$. Note that 
\begin{align*}
    \sum_{j=1}^m W_{ij} = & \sum_{j=1}^m A_{ij} - \sum_{k=1}^m A_{ik} - \frac1m\sum_{j=1}^m \sum_{k=1}^mA_{jk} + \frac1m\sum_{k=1}^m\sum_{\ell=1}^m A_{k\ell}=0,
\end{align*}
for all $i\in\set{1,\ldots,m}$. 
It follows that $\sW$ is an orthogonal projection. Consequently, the operator norm is one.

From the fact that $-\frac12\sD$ and $\sW$ have the same kernel, and  $\im(\sW)\not=\im(\sD)$ (elements of  $\im(\sD)$ has zero diagonal), it follows from unicity of orthogonal projections that $-\frac12\sD$ cannot be an orthogonal projection. 

Finally, let us calculate the operator norm of $-\frac12\sD$. We prove $\|-\frac12\sD\|_{\text{op}}=\sqrt m$, by showing that $\sqrt m$ is both a lower and an upper bound for $\|-\frac12\sD\|_{\text{op}}$.

First we prove that $\sqrt m$ is a lower bound of the operator norm. Note that   $-\frac12\sD(I- E/m)_{ij} = -1$ when $i\neq j$ and zero otherwise. So $\|-\frac12\sD(I-E/m)\|_F^2  = m(m-1)$. Note that $\|I - E/m\|_F=\sqrt{m-1}$. So $ \|-\frac12\sD\|_{\text{op}} \ge \frac{\|-\frac12\sD(I-E/m)\|_F}{\|I-E/m\|_F}=\frac{\sqrt{m(m-1)}}{\sqrt{ m-1}}= \sqrt{m}$. 

Let us continue with the upper bound. 
Let $A$ be an $m\x m$-matrix of Frobenius norm one. 
We can write $A=B+D$, where $B_{ij}=A_{ij}$ when $i\neq j$ and $B_{ii}=0$, and $D$ is a diagonal matrix with $D_{ii}=A_{ii}$, for all $i,j=1,\ldots,m$. Note that $B$ and $D$ are orthogonal with respect to the Frobenius inner product. Then by the linearity of $\sD$ and  the triangle inequality 
\begin{align*}
   \| -\frac12\sD(A)\|_F \le  \|-\frac12\sD(B)\|_F+\|\frac12\sD(D)\|
    = \|B\|_F+\|-\frac12\sD(D)\|_F.
\end{align*}

We can write \[
D = \sum_{i=1}^m  A_{ii}e_ie_i^T. 
\]
Note that $\set{e_ie_j^T:i,j\in\set{1,\ldots,m}}$ is an orthonormal basis of the space of $m\x m$-matrices in the Frobenius inner product. We have 
\begin{align*}
    -\frac12\sD(e_ke_k^T) = e_ke_k^T - \frac12ee_k^T - \frac12 e_ke^T.
\end{align*} So\begin{align*}
\inpr{-\frac12\sD(e_ae_a^T),-\frac12\sD(e_be_b^T)}= &  \begin{cases}
    \frac12 & \text{ if } a\neq b,\\
    \frac12(m-1) & \text{ if } a=b. 
\end{cases} \\
\end{align*}
We have 
\begin{align*}
    - \frac{1}{2}\sD(D) =  -\frac12 \sD\rh{\sum_{i=1}^m A_{ii} e_ie_i^T }
    =  -\frac12 \sum_{i=1}^m A_{ii} \sD\rh{ e_ie_i^T }
\end{align*}
So 
\begin{align*}
    &\norm{-\frac12\sD(D)}_F^2 = \inpr{-\frac12 \sum_{i=1}^m A_{ii} \sD\rh{ e_ie_i^T },-\frac12 \sum_{i=1}^m A_{ii} \sD\rh{ e_ie_i^T }} \\
    = & \frac12\sum_{i,j:i\neq j}A_{ii}A_{jj} + \frac12(m-1)\sum_{i=1}^m A_{ii}^2. 
\end{align*}
Using that $A_{ii}A_{jj}\le \frac12A_{ii}^2+\frac12A_{jj}^2$, we see that 
\begin{align*}
    \norm{-\frac12\sD(D)}_F^2  \le & \frac14\sum_{i,j:i\neq j}(A_{ii}^2+A_{jj}^2) + \frac12(m-1)\sum_{i=1}^m A_{ii}^2 \\
    = &  \frac14\sum_{i=1}^m\sum_{j=1}^m(A_{ii}^2+A_{jj}^2) + \frac12(m-2)\sum_{i=1}^m A_{ii}^2 \\
    = & (m-1)\sum_{i=1}^m A_{ii}^2\\
    = & (m-1)\|D\|_F^2. 
\end{align*}
As $\|A\|_F=1$, and $B$ and $D$ are orthogonal, we have $1=\|B\|_F^2 + \|D\|_F^2$. Let $\alpha= \|D\|_F^2\in[0,1]$, then $ \|B\|_F^2 = 1-\alpha$. 
So 
\begin{align*}
     \| -\frac12\sD(A)\|_F \le  \sqrt{1-\alpha} + \sqrt{(m-1)\alpha}=:f(\alpha). 
\end{align*}
With simple algebra one can show that the maximum of $f$ is attained for $\alpha=\frac{m-1}{m}$.  
So \begin{align*}
    \|-\frac12 \sD(A)\|_F \le &  \sqrt{\frac1m} + \sqrt{\frac{(m-1)^2}{m}}
  =  \sqrt m.  
\end{align*}
It follows that both the upper and lower bound of  $\|-\frac12\sD\|_{\text{op}}$ are  $ \sqrt m$, so $\|-\frac12\sD\|_{\text{op}} = \sqrt m$. 

\subsection{Proof of  \cref{thm:convergenceofhatV}}\label{subsec:proofthmconvergenceofhatV}

Define  $\widehat V^k = X^k(X^k)^t - \hat\mu_k^2E$ with entries $\widehat V^k_{ab}$, $a,b=1,\ldots,m$, then $\widehat V = \frac1n\sum_{k=1}^n \widehat V^k$. 

Trivially for $x_1,x_2,x_3,x_4\in\mathbb{R}$,
 $|x_1x_2x_3x_4|\le  x_1^4+x_2^4+x_3^4+x_4^4$.
 So we  have for
\begin{align*}
	\widehat V^k_{ab}  &= X_a^kX_b^k - \frac1{m^2} \sum_{c=1}^m\sum_{d=1}^m X_{c}^kX_{d}^k  
\intertext{that}
\E((\widehat V_{ab}^k)^2) &\le E((X_{a}^k)^2(X_{b}^k)^2)+\frac1{m^4} \left(\sum_{c=1}^m\sum_{d=1}^m X_{c}^kX_{d}^k\right)^{\!\!2}+ \frac2{m^2}\E\left( X_{a}^kX_{b}^k\sum_{c=1}^m\sum_{d=1}^m X_{c}^kX_{d}^k\right) \\
&\le 4C+4C+8C=16C,
\end{align*}
where it is assumed that  all  moments of $X_a^k$ up to order four are bounded uniformly in $k=1,\ldots,n$ and $a=1,\ldots,m$, by some number $C>0$. Hence,
\begin{align*}
\var(\widehat V^k_{ab} )&\le \E((\widehat V_{ab}^k)^2)\le 16C.
\end{align*}

As $\widehat V^k,\widehat V^{k+t},\ldots, V^{k+ \floor{(n-k)/t}t}$ are independent, for $k=1,\ldots,t$, we have  
\begin{align*}
\var\rh{\frac1n\rh{\widehat V^k_{ab}+\widehat V_{(k+t)ab}+\ldots+ V_{(k+ \floor{(n-k)/t}t)ab}}}& \le  \frac{\floor{(n-k)/t}}{n^2} 16C\le   \frac{16}{nt}C.
\end{align*}
Applying \cref{lem:boundssecondmomentgeneralm} gives 
\begin{align*}
    \var(\widehat V_{ab}) & \le \frac{16t^2}{nt}C=\frac{16t}{n}C.
\end{align*}
 It follows that 
\begin{align*}
\E[\|\widehat V - V\|_F^2] &=   \sum_{a=1}^m\sum_{b=1}^m \var(\widehat V_{ab}) \le  \frac{16m^2t}n C.
\end{align*}
Consequently,  by Jensen's inequality the claim follows: $\E(\|\widehat V-V\|_F)^2\le \E(\|\widehat V-V\|_F^2)$.  

It follows from \cref{thm:Vestimator} in combination with the definition of $\sD$, that $\widehat D_{ij} = \widehat V_{ii} + \widehat V_{jj} - 2\widehat V_{ij}$.   Similarly, using \cref{lem:boundssecondmomentgeneralm} again and the definition of $\sD$, gives $\var(\widehat D_{ij}) \le \frac {256t}n C$.  Using $\E(\widehat D)=D$, the claim for   $\E(\|\widehat D-D\|_F)$ follows similarly to that for $\widehat V$.

Finally, from \cref{thm:Vestimator},
\[\widehat W_{ij} = \widehat V_{ij} - \frac1m\sum_{a=1}^m \widehat V_{ia} - \frac1m\sum_{a=1}^m \widehat V_{ja} + \frac1{m^2} \sum_{a=1}^m\sum_{b=1}^m \widehat V_{ab}. \]
Applying \cref{lem:boundssecondmomentgeneralm}   gives 
\begin{align*}
    \var(\widehat W_{ij}) & \le 4 \rh{\var(\widehat V_{ij}) + \var\rh{\frac1m\sum_{a=1}^m \widehat V_{ia}}+ \var\rh{\frac1m\sum_{a=1}^m \widehat V_{ja}} + \var\rh{\frac1{m^2} \sum_{a=1}^m\sum_{b=1}^m \widehat V_{ab}}} \\
   &  \le \frac {256t}n C.
\end{align*}
Again, in a similar way to that of $\widehat V$, 
the claim for $\E(\|\widehat W-W\|_F)$ follows.

\subsection{Proof of  \cref{thm:convergenceinL2ofwidetildeSigma}}\label{proof:convergenceinL2ofwidetildeSigma}

Define  $\hat S^k = (X^{2k}-X^{2k-1})(X^{2k}-X^{2k-1})^t$. 
Then we have 
\begin{align*}
    \var(\hat S _{ab}^K) \le & \E\vh{\rh{X^{2k}_a-X^{2k-1}_a}^2\rh{X^{2k}_b-X^{2k-1}_b}^2}\\
    = & \E\vh{ \rh{X_a^{2k}X_b^{2k}-X_a^{2k}X_b^{2k-1}- X_a^{2k-1}X_b^{2k} + X_a^{2k-1}X_b^{2k-1} }^2  }.
\end{align*}
The latter can be written as a sum of 16 elements $\E \vh{X_a^{c}X_a^{d}X_b^{e}X_b^{f}}$, where $c,d,e,f\in\set{2k-1,2k}$. Applying the Jensen's inequality and then  H\"older's inequality twice gives  
\begin{align*}
    \abs{ \E \vh{X_a^{c}X_a^{d}X_b^{e}X_b^{f}}}
    &  \le \E \vh{\abs{X_a^{c}X_a^{d}X_b^{e}X_b^{f}}}\\ 
     & \le  \sqrt{\E \vh{(X_a^{c})^2(X_a^{d})^2}\E\vh{(X_b^{e})^2(X_b^{f})^2}}\\
     & \le   \sqrt[4]{\E \vh{(X_a^{c})^4}\E\vh{(X_a^{d})^4}\E\vh{(X_b^{e})^4}\E\vh{(X_b^{f})^4}}\\
     &\le  \sqrt[4]{C^4}=C. 
\end{align*}
It follows that $\var(\hat S_{ab}^{K})\le 16C$. 

So $\hat S  = \frac1{\floor{n/2}}\sum_{k=1}^{\floor{n/2}} \hat S^k$. 
As $\hat  S^k,$ $\hat S_{k+t},$ $\ldots, \hat S_{k+ \floor{(\floor{n/2}-k)/t}t}$ are independent, for $k=1,\ldots,t$, we have  
\begin{align*}
&\var\rh{\frac1{\floor{n/2}}\rh{\hat S_{ab}^k+\hat S_{ab}^{k+t}+\ldots+ \hat S_{ab}^{k+ \floor{(\floor{n/2}-k)/t}t}}}\\
& \le   \frac{\floor{(\floor{n/2}-k)/t}}{\floor{n/2}^2} 16C\le  \frac{16C}{\floor{n/2}t }.
\end{align*}
Applying \cref{lem:boundssecondmomentgeneralm} gives
\begin{align*}
    \var(\hat S_{ab})  &= \var\rh{\sum_{k=1}^t \frac1{\floor{n/2}}\rh{\hat S_{ab}^k+\hat S _{ab}^{k+t}+\ldots+ \hat S_{ab}^{k+ \floor{(\floor{n/2}-k)/t}t}}} \\
     &\le  t \sum_{k=1}^t \var\rh{\frac1{\floor{n/2}}\rh{\hat S_{ab}^k+\hat S_{ab}^{k+t}+\ldots+ \hat S_{ab}^{(k+ \floor{\floor{n/2}-k)/t}t}}}\\
    &\le    t \sum_{k=1}^t \frac{16C}{\floor{n/2}t }\\
     & =\frac{16C}{\floor{n/2}}t.
\end{align*}
 It follows that 
\begin{align*}
\E[\|\hat S -  \Sigma_1 \|_2^2] = &  \sum_{a=1}^m\sum_{b=1}^m \var(\widetilde \Sigma_{ab}) \le  \frac{16m^2t}{\floor{n/2}} C,
\end{align*}
and by Jensen's inequality that 
\[\E[\|\hat S- \Sigma_1 \|_2]\le \sqrt{\E[\|\hat S -  \Sigma_1 \|_2^2]}\le 4\sqrt{\frac{m^2t}{\floor{n/2}} C}.\]

\section{Auxiliary results}

\begin{lemma}\label{lem:boundssecondmoment}
Let $x,y\in\R$. Then $ |xy|\le\frac 12 (x^2+y^2)$.
\end{lemma}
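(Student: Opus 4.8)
The plan is to derive the bound from the single elementary fact that the square of any real number is nonnegative. The entire inequality is a one-line consequence of completing the square, so the proof is short and requires no external results.

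First I would observe that $(|x|-|y|)^2 \ge 0$, since it is the square of a real number. Expanding this square yields
\begin{equation*}
|x|^2 - 2|x|\,|y| + |y|^2 \ge 0,
\end{equation*}
and since $|x|^2 = x^2$ and $|y|^2 = y^2$, this rearranges directly to $2|x|\,|y| \le x^2 + y^2$, i.e.\ $|x|\,|y| \le \tfrac12(x^2+y^2)$.

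Next I would use the multiplicativity of the absolute value, $|xy| = |x|\,|y|$, to conclude $|xy| \le \tfrac12(x^2+y^2)$, which is exactly the claim.

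There is no real obstacle here: the only thing to be careful about is passing between $|xy|$ and $|x|\,|y|$ (so that the bound applies to the product $xy$ and not merely to $|x|\,|y|$), and noting that equality holds precisely when $|x|=|y|$. This lemma is the scalar building block behind the repeated applications of the Cauchy--Schwarz/AM--GM type steps used in the variance bounds of \cref{thm:convergenceofhatV} and \cref{thm:convergenceinL2ofwidetildeSigma}.
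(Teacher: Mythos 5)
Your proof is correct and follows essentially the same idea as the paper's: both derive the bound from the nonnegativity of a square. The paper uses the two expansions $0\le(x-y)^2$ and $0\le(x+y)^2$ to cover both signs of $xy$, whereas you use the single expansion $0\le(|x|-|y|)^2$ together with $|xy|=|x|\,|y|$ --- a cosmetic difference only.
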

\begin{proof}It follows from $0\le(x-y)^2=x^2+y^2-2xy$ and $0\le(x+y)^2=x^2+y^2+2xy$ that $|2xy|\le x^2+y^2$.
\end{proof}

\begin{lemma}\label{lem:boundssecondmomentgeneralm1}
Let $x_1,\ldots,x_m\in\R$. Then $(\sum_{i=1}^mx_i)^2\le m\sum_{i=1}^mx_i^2$.
\end{lemma}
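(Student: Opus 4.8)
The plan is to prove this elementary inequality either as a direct consequence of the Cauchy--Schwarz inequality or, more in keeping with the preceding \cref{lem:boundssecondmoment}, by expanding the square and bounding the cross terms. I will present the second route, since it reuses the pairwise bound $|xy|\le\tfrac12(x^2+y^2)$ just established.

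First I would expand the left-hand side as
\[
\left(\sum_{i=1}^m x_i\right)^{\!2}=\sum_{i=1}^m x_i^2+\sum_{i\neq j}x_i x_j,
\]
isolating the $m$ diagonal terms from the $m(m-1)$ off-diagonal terms. Next I would apply \cref{lem:boundssecondmoment} to each cross term, $x_i x_j\le\tfrac12(x_i^2+x_j^2)$, and sum over all ordered pairs $i\neq j$. By symmetry each index $i$ appears in exactly $2(m-1)$ of these summands (once as the first coordinate and once as the second, across the $m-1$ partners), so
\[
\sum_{i\neq j}x_i x_j\le\frac12\sum_{i\neq j}\bigl(x_i^2+x_j^2\bigr)=(m-1)\sum_{i=1}^m x_i^2.
\]
Combining the two displays gives $\bigl(\sum_i x_i\bigr)^2\le\sum_i x_i^2+(m-1)\sum_i x_i^2=m\sum_i x_i^2$, which is the claim.

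There is no genuine obstacle here; the only point demanding a moment's care is the combinatorial bookkeeping of the cross terms, namely checking that the double sum $\sum_{i\neq j}(x_i^2+x_j^2)$ evaluates to $2(m-1)\sum_i x_i^2$ rather than some other multiple. As an alternative one-line proof, I would simply invoke Cauchy--Schwarz applied to the vectors $(1,\ldots,1)$ and $(x_1,\ldots,x_m)$, giving $\bigl(\sum_i x_i\bigr)^2=\bigl(\sum_i 1\cdot x_i\bigr)^2\le\bigl(\sum_i 1\bigr)\bigl(\sum_i x_i^2\bigr)=m\sum_i x_i^2$; either formulation completes the argument.
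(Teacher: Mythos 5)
Your proof is correct and takes essentially the same route as the paper: expand the square and control the cross terms with the pairwise bound $|x_ix_j|\le\tfrac12(x_i^2+x_j^2)$ from the preceding lemma. The only cosmetic difference is that the paper applies this bound uniformly to the full double sum $\sum_{i=1}^m\sum_{j=1}^m x_ix_j$ (the diagonal terms satisfy it with equality), whereas you split off the diagonal terms first; your Cauchy--Schwarz alternative is also valid but not what the paper does.
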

\begin{proof}From  \cref{lem:boundssecondmoment}, 
$(\sum_{i=1}^mx_i)^2=\sum_{i=1}^m\sum_{j=1}^m x_ix_j\le \frac12\sum_{i=1}^m\sum_{j=1}^m(x_i^2+x_j^2)=m\sum_{i=1}^mx_i^2$.
\end{proof}

\begin{corollary}\label{lem:boundssecondmomentgeneralm}
  Let $X_1,\ldots,X_m$ be random variables. Then,  $\textrm{Var}(\sum_{k=1}^m X_k)$ $ \le  m \sum_{k=1}^m \textrm{Var}(X_k)$.
\end{corollary}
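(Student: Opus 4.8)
The plan is to reduce this probabilistic statement to the purely deterministic inequality of \cref{lem:boundssecondmomentgeneralm1} by centering the random variables and then invoking monotonicity of expectation. First I would rewrite the variance as the expectation of a square: setting $Y=\sum_{k=1}^m X_k$, we have $\var(Y)=\E[(Y-\E Y)^2]$, and by linearity of expectation $Y-\E Y=\sum_{k=1}^m (X_k-\E X_k)$. This re-expresses the left-hand side as $\E\big[(\sum_{k=1}^m (X_k-\E X_k))^2\big]$, i.e.\ as the expectation of the square of a sum of the \emph{centered} variables.

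The key step is then to apply \cref{lem:boundssecondmomentgeneralm1} pointwise. For each sample point, set $x_k=X_k-\E X_k$; the lemma gives the almost-sure inequality
\[
\Big(\sum_{k=1}^m (X_k-\E X_k)\Big)^{\!2}\le m\sum_{k=1}^m (X_k-\E X_k)^2.
\]
Taking expectations of both sides and using monotonicity together with linearity of expectation yields $\var(Y)\le m\sum_{k=1}^m \E[(X_k-\E X_k)^2]=m\sum_{k=1}^m \var(X_k)$, which is exactly the claim.

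There is no substantive obstacle here, since the work is entirely done by the algebraic inequality already established. The only points requiring a little care are that \cref{lem:boundssecondmomentgeneralm1} must be applied to the centered variables $X_k-\E X_k$ rather than to the $X_k$ themselves, and that the existence of the relevant second moments is implicit in the statement once one refers to $\var(X_k)$, so the expectations above are well defined.
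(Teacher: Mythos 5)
Your proof is correct and is essentially identical to the paper's own argument: the paper also proves the corollary by setting $x_k = X_k - \E[X_k]$ in \cref{lem:boundssecondmomentgeneralm1} and taking expectations. You have simply spelled out the centering step and the use of monotonicity of expectation that the paper leaves implicit.
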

\begin{proof}
    Take $x_k=X_k-\E[X_k]$ and expectation in \cref{lem:boundssecondmomentgeneralm1}. 
\end{proof}

\section*{Acknowledgements}
CW and JvW are supported by the Independent Research Fund Denmark (grant number:  8021-00360B) and the University of Copenhagen through the Data+ initiative.  ZI is supported by the Novo Nordisk Foundation, Denmark (grant number: NNF20OC0061343).

\bibliographystyle{DeGruyter}
\bibliography{Bieb}

\end{document}